 \newcommand{\be}{\begin{equation}}
 \newcommand{\ee}{\end{equation}}
 \newcommand{\bee}{\begin{eqnarray}}
 \newcommand{\eee}{\end{eqnarray}}
 \newtheorem{theo}{Theorem} 
 \newtheorem{lem}{Lemma} 
\newcommand{\algorithmiclastcon}{\textbf{Output:}}
\newcommand{\algorithmicfirstcon}{\textbf{Input:}}
\newcommand{\firstcon}{\item[\algorithmicfirstcon]}
\newcommand{\lastcon}{\item[\algorithmiclastcon]}
\begin{document}

\title{Wireless Information and Power Transfer Design for Energy Cooperation Distributed Antenna Systems}

\author{
        Fangchao~Yuan,
        Shi~Jin,~\IEEEmembership{Member,~IEEE,}
        Kai-Kit Wong,~\IEEEmembership{Fellow,~IEEE,}
        and~Hongbo~Zhu

\thanks{Fangchao~Yuan, and Hongbo~Zhu are with the Jiangsu Key Laboratory of Wireless Communications, College of Telecommunications and Information Engineering, Nanjing University of Posts and Telecommunications, Nanjing 210003, China. (e-mail: dylanyfc@163.com, zhb@njupt.edu.cn).}
        \thanks{Shi~Jin is with the National Mobile Communications Research Laboratory, Southeast University, Nanjing 210096, China. (e-mail: jinshi@seu.edu.cn). }
        \thanks{Kai-Kit Wong is with the Department of Electronic and Electrical Engineering, University College London, London WC1E 7JE, U. K. (e-mail: kai-kit.wong@ucl.ac.uk).}}

\maketitle

\begin{abstract}
Distributed antenna systems (DAS) have been widely implemented in state-of-the-art cellular communication systems to cover dead spots. Recent studies have also indicated that DAS have advantages in wireless energy transfer (WET). In this paper, we study simultaneous wireless information and power transfer (SWIPT) for a multiple-input single-output (MISO) DAS in the downlink which consists of arbitrarily distributed remote antenna units (RAUs). In order to save the energy cost, we adopt energy cooperation of energy harvesting (EH) and two-way energy flows to let the RAUs trade their  harvested energy through the smart grid network. Under individual EH constraints, per-RAU power constraints and various smart grid considerations, we investigate a power management strategy that determines how to utilize the stochastically spatially distributed harvested energy at the RAUs and how to trade the energy with the smart grid simultaneously to supply maximum wireless information transfer (WIT) with a minimum WET constraint for a receiver adopting power splitting (PS). Our analysis shows that the optimal design can be achieved in two steps. The first step is to maximize a new objective that can simultaneously maximize both WET and WIT, considering both the smart grid profitable and smart grid neutral cases. For the grid-profitable case, we derive the optimal full power strategy and provide a closed-form result to see under what condition this strategy is used. On the other hand, for the grid-neutral case, we illustrate that the optimal power policy has a double-threshold structure and present an optimal allocation strategy. The second step is then to solve the whole problem by obtaining the splitting power ratio based on the minimum WET constraint. Simulation results are provided to evaluate the performance under various settings and characterize the double-threshold structure.
\end{abstract}

\begin{IEEEkeywords}
Energy harvesting, distributed antennas, simultaneous wireless information and power transfer, smart grid.
\end{IEEEkeywords}

\IEEEpeerreviewmaketitle

\section{Introduction}
\IEEEPARstart{E}{nergy harvesting} (EH) traditionally refers to the extraction of energy from ambient environment for cost-effective and self-sustainable operation \cite{BIB01,BIB02,BIB03,Yuan}. However, energy harvested from ambient environments is passive, unreliable and uncontrollable to yield useful energy when needed. A new trend hence has emerged to use radio-frequency (RF) purposefully to transfer energy over the air to charge devices for their communications. This technology enables proactive energy replenishment of wireless devices, resulting in advantages in supporting applications with quality-of-service (QoS) requirement. True mobility would be achieved because mobile devices no longer depend on centralized power sources.

Wireless energy transfer (WET) has long been considered as a possibility, dating back to as early as 1891 in Tesla's demonstration~\cite{Tesla}. On the other hand, radio signals have since been widely used for wireless information transmission (WIT). As a consequence, it is reasonable that simultaneous wireless information and energy transfer (SWIPT) has recently drawn an upsurge of interests, see e.g., \cite{Varshney,Grover,ZR1,ZR2}. Using this technique, mobile users are provided with not only wireless data but also access to reliable energy supply at the same time.

Varshney was the first to propose the idea of SWIPT which was published in \cite{Varshney}, where he characterized the fundamental performance tradeoff with a capacity-energy function. Later, \cite{Grover} extended the result to frequency-selective channels. More recently, optimal design of different outage for the energy/rate tradeoffs was studied in~\cite{ZR1} subject to co-channel interference. In \cite{ZR2}, practical receiver designs were investigated for SWIPT. One major concern for SWIPT is its drastically decaying WET efficiency over the distance due to propagation loss. To alleviate this, multiple-input multiple-output (MIMO) beamforming systems \cite{ZR3,ZR4,ZR5,beamforming1,beamforming2,beamforming3,beamforming4,beamforming5}  have been proposed to help improve the WET efficiency. In particular, the authors in~\cite{ZR3} characterized the various achievable rate-energy (R-E) tradeoffs by practical receiver designs. The results have subsequently been extended to massive MIMO~\cite{ZR4} and multiuser channel setups~\cite{ZR5,beamforming1}.

In order to achieve efficient WET, electromagnetic energy should be concentrated into a sharp narrow beam, referred to as energy beamforming which was studied in~\cite{ZR3,beamforming1,beamforming2,beamforming3,beamforming4,beamforming5} for different scenarios. Nevertheless, only those users close to the energy transmitter can harvest meaningful energy, while those far away from the transmitter will get much less power. Such distance limitation for WET can be prevented if energy receivers are brought closer to the transmitters in distributed antenna systems (DASs) \cite{Yuan2}. Specifically, the remote antenna units (RAUs) are more arbitrarily distributed over the cells, and the distance from any given user equipment (UE) to its nearby RAU(s) is much smaller, making SWIPT more viable.

While there is strong interest to use EH to reduce or even replace the energy purchased from the grid, the harvested energy highly depends on  environmental factors such as location and weather, and is random and intermittent by nature. Hence, it is difficult to maintain the prescribed QoS given the uncertainty of the available power for each RAU if it relies solely on EH. A better approach is therefore to have both EH and the grid to power the RAUs \cite{grid1,grid2,grid3,grid4,grid5}. Recent advances in smart grids further enable power trading amongst consumers via the use of smart meters \cite{sgrid1,sgrid2,sgrid3,sgrid4}. Two-way energy flows are possible between the grid and the RAUs, facilitating also the RAUs to trade their unevenly harvested energy through the smart grid. The fact that the mismatch between EH and the RAU's power demand leads to energy outage or energy wastage (insufficient or excessive harvested energy) needs to be addressed.

Overall, we see three critical challenges: SWIPT, EH wireless systems, and smart grid enabled EH wireless systems. To the best of our knowledge, prior works tended to concentrate upon one or two of the three challenges. Motivated by this, in this paper, we consider a DAS for SWIPT with EH capability and smart grid coexisting, which involves addressing all three challenges jointly. In particular, new challenges arise for the design of power management for each RAU with random EH and the corresponding trade management with smart grid when serving the space-dependent and time-varying SWIPT traffic. The fact that the harvested energy is typically much cheaper than the energy purchased from the grid, also motivates the maximization of the use of the harvested energy to save cost. It is therefore reasonable to assume that the smart grid will not be ``trade-deficit" during all trading with the RAUs.


In this paper, we focus on the use of power splitting (PS) receivers permitting each user to receive both information and energy from the RAU continuously at all time. Time switching (TS) receivers can be considered as a special case of PS with only binary split power ratios \cite{ZR2,ZR3}, and therefore PS can in general achieve better rate-energy transmission trade-off than TS. We show that the optimal design can be achieved in two steps. The first is to maximize the WET and WIT performance for two cases, namely the smart grid profitable and smart grid neutral cases, with the full-power and double-threshold power allocation policies, respectively. Then the problem is addressed by finding the PS ratio from the minimum WET constraint.


The rest of the paper is organized as follows. The system model is presented in Section II. Section III analyzes the characteristics of the optimal power allocation policy and also provides some lemmas. We address the smart grid profitable case in Section IV, while the smart grid neutral case will be tackled in Section V. Numerical results are presented and compared in Section VI. Section VII concludes the paper.


\section{System Model}
\begin{figure*}[!t]
\centering
\includegraphics[scale=0.8]{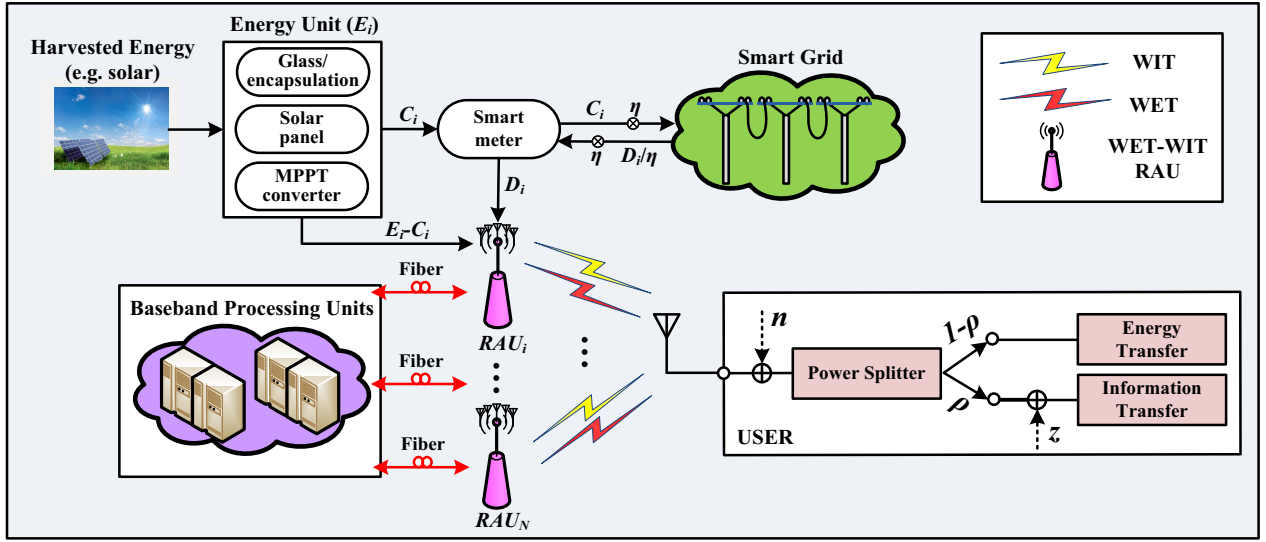}
\caption{The system model where smart grid techniques enable power trading among the consumers via smart meters permitting power transmission between the RAUs and the power grid.}\label{Fig:2}
\end{figure*}

In this section, we introduce our model for energy cooperation and SWIPT for cellular systems, which is depicted in Fig.~\ref{Fig:2}. We consider a downlink single-cell DAS that has $N$ RAUs with $M>1$ antennas each, all connected to a baseband processing unit (BPU) using high-quality bidirectional wired (e.g., radio-over-fiber) or wireless (e.g., microwave repeater) links. The BPU is assumed to have all the necessary baseband processing capability of a base station (BS), and the harvested energy profile of all the RAUs. Moreover, it is assumed that the DAS is to serve only one user for simplicity, and all RAUs and the user know channel state information perfectly. Note that all the power allocation is centrally controlled. Then the received signal for the user is written as
\begin{equation}
{y} = \sum\limits_{i = 1}^N {\sqrt {{p_i}} } \mathbf{g}_{i}^H{\mathbf{w}_{i}}{s} + {n},
\end{equation}
where $p_i$ is the transmit power consumed by the $i$-th RAU, ${\mathbf{w}_{i}}$ is defined as the beamforming column vector of length $M$ for the $i$-th RAU with unit norm (${{\| {{\mathbf{w}_{i}}}\|}^2} = 1$), $s$ stands for the transmitted signal with zero mean and unit variance, and $n$ denotes the zero-mean additive white Gaussian noise (AWGN) with variance ${\sigma ^2}$, the superscript $(\cdot)^H$ is the Hermitian operation, and $\mathbf{g}_{i}$ is the $M \times1$ channel vector to characterize the channel state, which is modelled by
\begin{equation}
\mathbf{g}_{i}= \mathbf{h}_{i} \sqrt {{\beta _{i}}},
\end{equation}
where $\mathbf{h}_{i}$ indicates the channel column vector for small-scale fading with each entry assumed to have zero mean and unity variance, and $\beta _{i}$ accounts for the large-scale fading which can be factored into
\begin{equation}
{\beta _{i}} = \frac{{{1}}}{{d_{i}^\alpha }},
\end{equation}
where $\alpha$ is the decay exponent. The received signal power is split for simultaneous WIT and WET, a $\rho  \in \left( {0,1} \right]$ portion of the received signal power to the WIT, and the remaining $\left({1-\rho}\right)$ portion for the WET under SWIPT. As a result, the split signal for the WET of the user is given as
\begin{equation}
y^{\rm WET} = \sqrt {{1-\rho}} y.
\end{equation}
Accordingly, the energy transferred to the user is proportional to the split signal which is given by
\begin{equation}\label{EH}
  {Q} = \xi \left({1-\rho}\right) \left({{\left| {\sum\limits_{i = 1}^N {\sqrt {{p_i}} } \mathbf{g}_{i}^H{\mathbf{w}_{i}}} \right|^2}+\sigma^2}\right),
\end{equation}
where $\xi  \in \left( {0,1} \right]$ stands for the energy conversion efficiency.

Furthermore, the split signal for WIT is expressed as
\begin{equation}
y^{\rm WIT} = \sqrt {{\rho }} y + {z},
\end{equation}
where $z$ is the AWGN with zero mean and variance ${\tau^2}$ during the WIT process. Thus, the user's achievable rate is given as
\begin{equation}\label{ID}
{R} = {\log _2}\left( {1 + \frac{ \rho {\left| {\sum\limits_{i = 1}^N {\sqrt {{p_i}} } \mathbf{g}_{i}^H{\mathbf{w}_{i}}} \right|^2} }{{{\rho}{\sigma ^2} + {\tau ^2}}}} \right).
\end{equation}

Each RAU $i$ is equipped with an EH device that delivers a harvesting energy rate $E_{i}>0$ at the beginning of transmission, for $i=1, \dots, N$. In practice, the value of $p_i$ varies according to the harvested energy at each RAU as well as the channel state information between the $i$-th RAU and the user. By combining the harvested energy rate $E_{i}$ and the transfer power $p_i$, we use $D_i$ to indicate the energy shortage leading to energy borrowing from the smart grid, i.e., {\em grid discharging}, and $C_i$ to present the energy surplus of renewable energy which can be paid/traded back to the smart grid, i.e., {\em grid charging}.  Since these two values are usually independent, it is likely that some RAUs are short of renewable energy to match demand (i.e., $D_i>0$), while the other RAUs are adequate in renewable energy (i.e., $C_i>0$). Such a geographical diversity requires some RAUs to borrow $D_i$ energy from the grid but the other RAUs to pay/trade back the extra renewable energy $C_i$ in order to trade or reuse the renewable energy by other RAUs rather than being wasted. These can be accommplished by energy cooperation which enables two-way energy flows between the smart grid and the RAUs~\cite{sgrid1,sgrid2,sgrid3,sgrid4}. From these two variables, the transmit power $p_i$ for each RAU is calculated as
\begin{equation}\label{Power}
p_i=E_i+D_i-C_i.
\end{equation}
Note that by definition, $C_i \ge 0$, $D_i \ge 0$, $p_i \ge 0~\forall i$. From (\ref{Power}), it imposes that $E_i+D_i-C_i \ge 0$ on $D_i$ and $C_i$.

It is impossible to deliver power between the RAUs and the smart grid without any loss, i.e., perfect sharing of power is not possible among the RAUs. There will be energy loss efficiency $\eta$ in electric power transmission between RAUs and the smart grid. To minimize the energy use from the grid, we assume that the smart grid will not be ``trade-deficit" during all trading with the RAUs, which we refer to as the {\em green smart constraint}. Specifically, this is formulated as
\begin{equation}
\sum\limits_{i = 1}^N { {{S _{i}}} }  \ge 0,
\end{equation}
where $S_i$ denotes the current trade state between the $i$-th RAU and the smart grid, which can be further described as
\begin{equation}\label{state}
{S_i} = \eta {C_i} - \frac{D_i}{\eta}.
\end{equation}

The coupling of SWIPT and the power utilization optimization introduces new challenges on the design of green energy-enabled wireless networks. In this paper, we aim to maximize the WIT performance with a minimum WET constraint and the per-RAU power constraint as well as the green smart constraint. From \eqref{EH} and \eqref{ID}, we can assert that when the WIT performance is maximized with $\rho$ remaining fixed, the WET performance is also maximum. Therefore, our optimization problem is first to maximize ${| {\sum\nolimits_{i = 1}^N {\sqrt {{p_i}} } \mathbf{g}_{i}^H{\mathbf{w}_{i}}} |^2}$ with the variable $\rho$ being fixed and then to adjust the value of the PS variable $\rho$ according to the WET constraint. Hence, we can see that the main problem is to tackle the former one. The latter can be easily solved by figuring out the splitting power ratio based on the minimum WET constraint using simple calculation. As a result, our focus is on solving the first problem, which is formulated as
\begin{subequations}\label{objective1}
\begin{align}
  \mathop {\max }\limits_{{p_i}, {\mathbf{w}_{i}}} ~~& {\left| {\sum\limits_{i = 1}^N {\sqrt {{p_i}} } \mathbf{g}_{i}^H{\mathbf{w}_{i}}} \right|^2}  \\
{\rm s.t.} ~~& \sum\limits_{i = 1}^N {{{S _{i}}} }  \ge 0\\
                                      &  0 \le p_i \le p_{max}, &\forall i \in \left\{ {1, 2,  \ldots, N} \right\} \\
                                      &  D_i \ge 0, ~C_i \ge 0, &\forall i \in \left\{ {1, 2,  \ldots, N} \right\} \\
                                      & {{\left\| {{\mathbf{w}_{i}}} \right\|}^2} = 1, &\forall i \in \left\{ {1, 2,  \ldots, N} \right\}
\end{align}
\end{subequations}
where $p_{max}$ is the per-RAU power constraint. In general, it is hard to obtain an explicit solution for this joint optimization problem since it is strictly non-convex. We first provide some useful lemmas for subsequent use. Then we characterize the optimal policy by wisely combining the methods of Lagrange multipliers and considering the problem in two scenarios.

\section{Analysis of the Optimal Policy}
We first provide an efficient beamforming strategy which achieves nearly the same performance compared to the optimal approach. From \cite{Lee}, we have the following lemma to design the beamforming vectors ${\mathbf{w}_{i}}$.

\begin{lem}
For given $p_i$'s, the optimal beamforming solution is identical to that of \cite{Lee} under a single-user multiple-input single-output (MISO) DAS, i.e., the distributed maximum ratio transmission (DMRT), or ${\mathbf{w}_{i}^*}=\frac{\mathbf{g}_{i}}{{\left\| {{\mathbf{g}_{i}}} \right\|}}$ for $i=1, \ldots, N$.
\end{lem}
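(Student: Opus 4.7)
The plan is to decouple the joint optimization over $\{\mathbf{w}_i\}$ from the power allocation by showing that the beamforming problem, with the $p_i$'s held fixed, admits a closed-form maximizer via elementary inequalities. Since the $p_i$'s enter only as non-negative weights in the objective, all the per-RAU unit-norm constraints act on disjoint groups of variables, and the smart-grid constraints (b)--(d) of \eqref{objective1} do not involve the $\mathbf{w}_i$'s at all, the beamforming subproblem reduces to
\begin{equation*}
\max_{\{\mathbf{w}_i\}}\ \left|\sum_{i=1}^{N}\sqrt{p_i}\,\mathbf{g}_i^H\mathbf{w}_i\right|^2\quad\text{s.t.}\ \|\mathbf{w}_i\|^2=1,\ \forall i.
\end{equation*}

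First I would apply the triangle inequality to obtain $\bigl|\sum_i\sqrt{p_i}\,\mathbf{g}_i^H\mathbf{w}_i\bigr|\le\sum_i\sqrt{p_i}\,|\mathbf{g}_i^H\mathbf{w}_i|$, with equality precisely when all the complex scalars $\mathbf{g}_i^H\mathbf{w}_i$ share a common phase. Next I would apply the Cauchy--Schwarz inequality termwise, $|\mathbf{g}_i^H\mathbf{w}_i|\le\|\mathbf{g}_i\|\,\|\mathbf{w}_i\|=\|\mathbf{g}_i\|$, with equality if and only if $\mathbf{w}_i$ is colinear with $\mathbf{g}_i$. Chaining the two bounds yields the upper bound $\bigl(\sum_i\sqrt{p_i}\,\|\mathbf{g}_i\|\bigr)^2$ on the objective, which depends only on the fixed quantities $p_i$ and $\|\mathbf{g}_i\|$.

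I would then verify that the DMRT choice $\mathbf{w}_i^{*}=\mathbf{g}_i/\|\mathbf{g}_i\|$ simultaneously saturates both inequalities: it is unit-norm by construction, it makes each inner product $\mathbf{g}_i^H\mathbf{w}_i^{*}=\|\mathbf{g}_i\|$ a non-negative real number (so the phases trivially align), and it attains the Cauchy--Schwarz bound for every $i$. Substituting back gives an objective value of exactly $\bigl(\sum_i\sqrt{p_i}\,\|\mathbf{g}_i\|\bigr)^2$, matching the upper bound, so $\{\mathbf{w}_i^{*}\}$ is optimal.

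There is no real obstacle here: the argument is a two-line application of triangle inequality followed by Cauchy--Schwarz, and the result coincides with the single-user MISO DMRT solution established in \cite{Lee}. The only mildly subtle point worth explicitly noting is that the optimum is invariant under a common unit-modulus phase rotation applied to all $\mathbf{w}_i$, so uniqueness holds only up to such a global phase; this does not affect the objective or any downstream power-allocation step.
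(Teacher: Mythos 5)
Your proof is correct. Note that the paper itself offers no proof of this lemma at all: it simply invokes the single-user MISO DAS result of \cite{Lee} and states the DMRT solution ${\mathbf{w}_{i}^*}=\mathbf{g}_{i}/\|\mathbf{g}_{i}\|$. What you have done is supply the elementary argument behind that citation: observing that the constraints (11b)--(11d) and the powers $p_i$ do not couple to the $\mathbf{w}_i$'s, decoupling the beamforming subproblem, and then chaining the triangle inequality with per-term Cauchy--Schwarz to get the bound $\left(\sum_i \sqrt{p_i}\,\|\mathbf{g}_i\|\right)^2$, which DMRT attains. This is exactly the standard maximum-ratio-transmission derivation underlying the cited result, so your route is not substantively different, just self-contained where the paper defers to a reference. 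Your remarks on the equality conditions (phase alignment across terms and colinearity per term) and on uniqueness only up to a global unit-modulus phase are accurate; one could add that for any $i$ with $p_i=0$ the choice of $\mathbf{w}_i$ is immaterial, so the phase-alignment condition need only hold over the terms with $p_i>0$, but this does not affect optimality of the stated $\mathbf{w}_i^*$.
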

Different from the conventional iterative algorithms which optimize the two parameters in an alternating fashion, in our case, we set ${\mathbf{w}_{i}^*}=\frac{\mathbf{g}_{i}}{{\left\| {{\mathbf{g}_{i}}} \right\|}}$ independent of $p_i$. Thus, we do not need to update ${\mathbf{w}_{i}}$ iteratively.  Substituting ${\mathbf{w}_{i}^*}=\frac{\mathbf{g}_{i}}{{\left\| {{\mathbf{g}_{i}}} \right\|}}$ into the optimization, our problem can be reformulated as
\begin{subequations}\label{objective2}
\begin{align}
\mathop {\max }\limits_{{C_i},{D_i}} ~~& {\left( {\sum\limits_{i = 1}^N {\sqrt {p_i}{\gamma_i}}} \right)^2}  \label{objective2_1} \\
{\rm s.t.} ~~& \sum\limits_{i = 1}^N {{{S _{i}}} }  \ge 0\\
                                      &  0 \le p_i \le p_{max}, &\forall i \in \left\{ {1, 2,  \ldots, N} \right\} \\
                                      &  D_i \ge 0, ~C_i \ge 0, &\forall i \in \left\{ {1, 2,  \ldots, N} \right\}
\end{align}
\end{subequations}
where $\gamma_i$ is the update channel gain, equalling to ${d_{i}}^{-\frac{\alpha}{2}}{\left\| {{\mathbf{h}_{i}}} \right\|}$. Without loss of generality, we assume that all $\gamma_i$'s are sorted in descending order as ${\gamma_1} > {\gamma_2} > \cdots > {\gamma_N}$. In order to further derivations, we also introduce the following lemma.

\begin{lem}
For $\eta<1$, the solution to \eqref{objective2} satisfies $C_iD_i=0$ for all $i$, i.e., the optimal policy of each RAU never enables the grid charging and discharging simultaneously.
\end{lem}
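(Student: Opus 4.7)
The plan is to use a direct exchange argument: if some RAU simultaneously charges and discharges through the grid, we show that this is wasteful under $\eta<1$ and can be removed without touching the objective or violating feasibility. Concretely, suppose $(C_i^*,D_i^*)$ is feasible (and eventually optimal) for \eqref{objective2} and that $C_j^* D_j^* > 0$ for some index $j$. I would define $\delta := \min\{C_j^*, D_j^*\}>0$ and consider the perturbed pair $(C_j',D_j') := (C_j^*-\delta,\, D_j^*-\delta)$, leaving all other $(C_i,D_i)$ unchanged.

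Next I would verify feasibility and objective-invariance of the perturbation. By construction, $C_j',D_j' \ge 0$ and exactly one of them is zero, so the nonnegativity constraints still hold at RAU $j$. The transmit power $p_j = E_j + D_j - C_j$ depends only on the difference $D_j - C_j$, which is preserved by the simultaneous reduction; hence $p_j' = p_j^*$, every per-RAU constraint still holds, and the objective \eqref{objective2_1} is unchanged. For the green smart constraint, the relevant trade state becomes
\begin{equation}
S_j' = \eta(C_j^* - \delta) - \tfrac{1}{\eta}(D_j^* - \delta) = S_j^* + \delta\!\left(\tfrac{1}{\eta} - \eta\right),
\end{equation}
and since $\eta < 1$ gives $\tfrac{1}{\eta}-\eta > 0$, we obtain $S_j' > S_j^*$, so $\sum_i S_i$ strictly increases and the green smart constraint remains satisfied. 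Thus the new allocation is feasible, still optimal, and satisfies $C_j' D_j' = 0$ at RAU $j$.

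Finally, I would iterate the construction over all indices $j$ with $C_j^* D_j^* > 0$ to produce an optimal solution with $C_i D_i = 0$ for every $i$, which is exactly the claim. I do not foresee a serious technical obstacle here; the argument is essentially a one-line swap combined with the strict inequality $1/\eta > \eta$. The only subtlety worth flagging is that the statement should be read in the ``there exists an optimal solution'' sense: when all $p_i = p_{\max}$ the objective cannot be strictly improved, so non-economical allocations with $C_i D_i > 0$ are not excluded a priori, but the exchange argument still produces an economically equivalent optimum satisfying the lemma, which is what the subsequent analysis will rely on.
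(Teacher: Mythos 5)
Your proof is correct, but it runs the exchange in the opposite direction from the paper's. You subtract $\delta=\min\{C_j,D_j\}$ from both $C_j$ and $D_j$, which keeps $p_j=E_j+D_j-C_j$ (and hence the objective \eqref{objective2_1} and the box constraints) unchanged while strictly increasing $S_j$ by $\delta(1/\eta-\eta)>0$; this yields an \emph{equally good} feasible point with $C_jD_j=0$, i.e., the ``there exists an optimal solution'' version of the claim. The paper instead defines $\bar C_j=[C_j-D_j/\eta^2]^+$ and $\bar D_j=[D_j-\eta^2C_j]^+$, which keeps every trade state $S_i$ in \eqref{state} fixed but strictly increases $p_j$, concluding that any policy with $C_jD_j>0$ is \emph{strictly} dominated and therefore cannot be optimal --- a nominally stronger conclusion matching the lemma's literal wording. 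The trade-off: the paper's perturbation silently assumes the enlarged power $\bar p_j$ still satisfies $\bar p_j\le p_{\max}$, and when $p_j$ is already at $p_{\max}$ (or more generally when the green smart constraint is slack) the strict-domination argument breaks down and optimal solutions with $C_jD_j>0$ genuinely exist; your existence reading sidesteps this edge case entirely and is exactly what the subsequent reduction $C_i=[E_i-p_i]^+$, $D_i=[p_i-E_i]^+$ requires. So your subtlety flag is well placed --- if anything, your version is the more defensible of the two.
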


\begin{proof}
Let $\left\{ {\left[ {{{C}_i},{{D}_i}} \right]} \right\}_{i = 1}^N$ be a feasible power policy which satisfies ${{C}_j}{{D}_j} > 0$ for some $j$. Let
\begin{align}
{{\bar C}_i} &= {\left[ {{C_i} - \frac{D_i}{\eta ^2}} \right]^ + },\\
{{\bar D}_i} &= {\left[ {{D_i} - {\eta ^2}{C_i}} \right]^ + },
\end{align}
where
\begin{equation}
{\left[ x \right]^ + } = \left\{ \begin{array}{ll}
x, & x \ge 0\\
0,   & x < 0. \end{array} \right.
\end{equation}
For all $i  \ne  j$, let ${{\bar C}_i}=C_i$ and ${{\bar D}_i}=D_i$. Note that the current trade states in \eqref{state} are unaffected by this change, since $\eta {C_i} - {D_i}/{\eta}=\eta {{\bar C}_i} - {{\bar D}_i}/{\eta}$, for all $i$. Therefore, the allocation policy $\left\{ {\left[ {{{\bar C}_i},{{\bar D}_i}} \right]} \right\}_{i = 1}^N$ is feasible. On the other hand, the resulting transmit power ${\bar p}_j$ at the $j$-th RAU becomes
\begin{equation}
{{\bar p}_j} = {E_j} + {{\bar D}_j} - {{\bar C}_j} = \left\{ \begin{array}{ll}
{E_j} + \frac{D_i}{\eta ^2} - {C_i},  & \eta {C_i}>\frac{D_i}{\eta},\\
{E_j} + {D_i} - {\eta ^2}{C_i},  & \mbox{otherwise},
\end{array} \right.
\end{equation}
and as such ${{\bar p}_j}>p_j$ due to $\eta<1$, and $C_i,D_i>0$. Since the objective function \eqref{objective2_1} is increasing in $p$, the power allocation policy $\left\{ {[ {{{\bar C}_i},{{\bar D}_i}} ]} \right\}_{i = 1}^N$ achieves better SWIPT than $\left\{ {[ {{{C}_i},{{D}_i}} ]} \right\}_{i = 1}^N$, and the latter policy cannot be optimal.
\end{proof}

We observe from \emph{Lemma 2} that we have $C_i > 0$ and $D_i=0$ (grid-charging), or $C_i = 0$ and $D_i>0$ (grid-discharging), or $C_i = 0$ and $D_i=0$ (referred to as grid-passive). The optimal policy does not store and retrieve energy simultaneously at any time. Through \emph{Lemma 2}, we can also see that
\begin{align}
{{C}_i} &= {\left[ {E_i-p_i} \right]^ + },\\
{{D}_i} &= {\left[ {p_i-E_i} \right]^ + }.
\end{align}
Since the problem now in \eqref{objective2} is a convex optimization problem,  Karush-Kuhn-Tucker (KKT) conditions are necessary and sufficient for optimality. Therefore, the Lagrangian function can be obtained as
\begin{multline}\label{general}
{\cal L}={\left( {\sum\limits_{i = 1}^N {\sqrt {p_i}{\gamma_i}}} \right)^2}+ \mathop \sum \limits_{i = 1}^N {\lambda _i}{p_i}+ \mathop \sum \limits_{i = 1}^N {\nu _i}{\left({p_{max}-p_i}\right)}\\
+ \mathop \sum \limits_{i = 1}^N {\theta _i}{C_i} + \mathop \sum \limits_{i = 1}^N {\xi _i}{D_i}+\mu{\mathop \sum \limits_{i = 1}^N S_i},
\end{multline}
where $\lambda_i$, $\nu_i$, ${\theta _i}$, ${\xi _i}$, $\forall i$ and $\mu$ are the non-negative Lagrange multipliers corresponding to the constraints.

The corresponding additional complimentary slackness conditions are given by
\begin{subequations}\label{slackness}
\begin{align}
{\lambda _i}{p_i}  =0, {\nu _i}{\left({p_{max}-p_i}\right)}&=0, \forall i \in \left\{ {1, \ldots N} \right\} \label{slackness_1}\\
{\theta _i}{C_i}  = 0, {\xi _i}{D_i} &=0, \forall i \in \left\{ {1, \ldots N} \right\} \label{slackness_2}\\
\mu{\mathop \sum \limits_{i = 1}^N S_i}& = 0. \label{slackness_3}
\end{align}
\end{subequations}
Next, we derive the power allocation solution for DAS with DMRT.  Based on the additional complimentary slackness condition~\eqref{slackness_3}, we design the strategy by separately solving two scenarios, namely, {\em smart grid profitable:} ${\mathop \sum \nolimits_{i = 1}^N S_i}>0$ and {\em smart grid neutral:} ${\mathop \sum \nolimits_{i = 1}^N S_i}=0$ and then give the entire optimal policy based on the strategy of each scenario.

\section{Optimal Design of Power Allocation for SCENARIO I: Smart Grid Profitable}
In this scenario, we can know that ${\mathop \sum \nolimits_{i = 1}^N S_i}>0$, and can obtain $\mu=0$ from (\ref{slackness_3}). The KKT optimality conditions are found by taking the derivatives with respect to $C_k$ and $D_k$ for
$k = \left\{ {1, \ldots N} \right\}$ as
\begin{align}\label{Lar_pro}
\frac{{\partial {\cal L}}}{{\partial {C_k}}} &= -\frac{{{\gamma _k}\sum\nolimits_{i = 1}^N {\sqrt {{p_i}} {\gamma _i}} }}{{\sqrt {{p_k}} }} - {\lambda _k} + {\nu _k}+{\theta_k},\\
\frac{{\partial {\cal L}}}{{\partial {D_k}}} &= \frac{{{\gamma _k}\sum\nolimits_{i = 1}^N {\sqrt {{p_i}} {\gamma _i}} }}{{\sqrt {{p_k}} }} + {\lambda _k} - {\nu _k}+{\xi_k}.
\end{align}
The optimal power allocation $p_k$ of the $k$-th RAU can be divided into three mutually exclusive cases according to the additional complimentary slackness conditions (\ref{slackness_1}):
\begin{align}\label{Three_cases}
\left( {p_k^ * ,\lambda _k^ * ,\nu _k^ * } \right) = \left\{ \begin{array}{l}
\left( {0,\lambda _k^ * ,0} \right),\\ [12pt]
\left( {p_k^ * ,0,0} \right)\left| {0 < p_k^ *  < {p_{\max }},} \right.\\[12pt]
\left( {{p_{\max }},0,\nu _k^ * } \right)
\end{array} \right.
\end{align}

\begin{theo}
The full-power policy: if the smart grid is to be profitable, i.e., ${\mathop \sum \nolimits_{i = 1}^N S_i}>0$, then the optimal strategy is that all the RAUs transmit with the maximum power $p_{max}$, i.e., $p_i^*=p_{max}$, $\forall i \in \{1,\ldots,N\}$.
\end{theo}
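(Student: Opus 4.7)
My plan is to prove Theorem~1 by contradiction via a local perturbation that exploits the strict slack $\sum_i S_i^* > 0$. Assume an optimal solution $\{[C_i^*, D_i^*]\}_{i=1}^N$ of \eqref{objective2} satisfies $\sum_i S_i^* > 0$ yet has some RAU $j$ with $p_j^* < p_{max}$. I will exhibit a feasible neighbouring point that strictly increases the objective, which is the desired contradiction.

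To build the perturbation I would invoke \emph{Lemma 2} to split into the three mutually exclusive states of RAU $j$: grid-discharging ($D_j^* > 0$, $C_j^* = 0$), grid-charging ($C_j^* > 0$, $D_j^* = 0$), or grid-passive ($C_j^* = D_j^* = 0$). In the discharging or passive case, raise $p_j$ by a small $\epsilon > 0$ via $D_j \leftarrow D_j^* + \epsilon$, which drops $\sum_i S_i$ by $\epsilon/\eta$; in the charging case, raise $p_j$ via $C_j \leftarrow C_j^* - \epsilon$, which drops $\sum_i S_i$ by $\eta\epsilon$. Picking
\begin{equation*}
0 < \epsilon \le \min\bigl\{p_{max}-p_j^*,\; \eta\textstyle\sum_i S_i^*,\; C_j^*\bigr\}
\end{equation*}
(dropping the $C_j^*$ term when not in the charging case) preserves $0 \le p_j \le p_{max}$, $C_j, D_j \ge 0$, and $\sum_i S_i \ge 0$, so the perturbed policy is feasible. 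Since the objective $\bigl(\sum_i \sqrt{p_i}\gamma_i\bigr)^2$ has partial derivative $\gamma_j \bigl(\sum_i \sqrt{p_i}\gamma_i\bigr)/\sqrt{p_j} > 0$ whenever $p_j > 0$, raising $p_j$ by $\epsilon$ strictly increases the objective and delivers the contradiction.

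The main obstacle is the edge case $p_j^* = 0$, where the partial-derivative formula is singular and the strict-monotonicity argument requires a direct check: lifting $p_j$ from $0$ to $\epsilon$ changes $\sum_i \sqrt{p_i}\gamma_i$ from $\sum_{i \ne j}\sqrt{p_i^*}\gamma_i$ to $\sum_{i \ne j}\sqrt{p_i^*}\gamma_i + \sqrt{\epsilon}\gamma_j$, which strictly increases the squared sum since $\gamma_j > 0$. A cleaner KKT-based alternative is also available: $\sum_i S_i^* > 0$ combined with \eqref{slackness_3} gives $\mu = 0$; plugging this into the stationarity condition for $D_k$ together with $\xi_k \ge 0$ yields $\nu_k > 0$, and then \eqref{slackness_1} forces $p_k^* = p_{max}$ for every $k$.
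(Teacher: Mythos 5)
Your proof is correct, but it takes a genuinely different route from the paper. The paper works entirely through the KKT machinery: since $\sum_i S_i>0$ forces $\mu=0$ by \eqref{slackness_3}, it examines the three cases of \eqref{Three_cases} and shows the first two are inconsistent with stationarity --- $p_k^*=0$ makes $\partial\mathcal{L}/\partial D_k$ blow up, and $0<p_k^*<p_{\max}$ forces the strictly positive quantity $\gamma_k\sum_i\sqrt{p_i}\gamma_i/\sqrt{p_k}+\xi_k$ to vanish --- leaving only $p_k^*=p_{\max}$. This is essentially the "cleaner KKT-based alternative" you sketch in your last sentence (though the paper reaches the contradiction via $\xi_k\ge 0$ rather than by isolating $\nu_k>0$; the two are equivalent). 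Your primary argument is instead an elementary perturbation: use the strict slack in $\sum_i S_i$ to buy an $\epsilon$ increase of $p_j$ through $D_j$ or $C_j$, verify feasibility of all constraints of \eqref{objective2}, and observe the objective strictly increases. What your route buys is robustness: it does not depend on differentiability of $\sqrt{p}$ at $p=0$ (the paper's Case 1 argument that the left-hand side "becomes infinity" is informal precisely because the KKT stationarity condition is not well defined there), it handles the $p_j^*=0$ edge case by a direct computation, and it needs nothing beyond monotonicity of the objective and Lemma~2's classification of the trade states. What the paper's route buys is economy and continuity with the rest of the analysis, since the same Lagrangian and slackness conditions are immediately reused for the grid-neutral scenario in Section~V. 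One small bookkeeping note: your single bound $\epsilon\le\eta\sum_i S_i^*$ is calibrated to the worst case (the discharging/passive perturbation, which costs $\epsilon/\eta$ of slack) and is merely conservative for the charging perturbation (which costs only $\eta\epsilon$), so it does cover both branches as claimed.
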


\begin{proof}
We consider the three mutually exclusive cases.
\subsection*{Case 1: $\left( {0,\lambda _k^ * ,0} \right)$}
When $p_k^ *=0$, we know $\nu_k=0$. Thus, setting the Lagrange function to zero leads to
\begin{align}
\frac{{\partial {\cal L}}}{{\partial {C_k}}} &= -\frac{{{\gamma _k}\sum\nolimits_{i = 1}^N {\sqrt {{p_i}} {\gamma _i}} }}{{\sqrt {{p_k}} }} - {\lambda _k} +{\theta_k}=0,\label{Lar_prof1_1}\\
\frac{{\partial {\cal L}}}{{\partial {D_k}}} &= \frac{{{\gamma _k}\sum\nolimits_{i = 1}^N {\sqrt {{p_i}} {\gamma _i}} }}{{\sqrt {{p_k}} }} + {\lambda _k} +{\xi_k}=0.\label{Lar_prof1_2}
\end{align}
With the power $p_k^ *=0$, the left-hand-side of \eqref{Lar_prof1_2} becomes infinity, which contradicts the KKT conditions. Therefore, it is impossible to allocate zero power to the $k$-th RAU for $\forall k$.

\subsection*{Case 2: $\left( {p_k^ * ,0,0} \right)\left| {0 < p_k^ *  < {p_{\max }}} \right.$}
When the power falls within the constraint $0 < p_k^ *< {p_{\max }}$, we know $\lambda_k=0$ and $\nu_k=0$. Similarly, setting the Lagrange function to zero leads to
\begin{align}
\frac{{\partial {\cal L}}}{{\partial {C_k}}} &= -\frac{{{\gamma _k}\sum\nolimits_{i = 1}^N {\sqrt {{p_i}} {\gamma _i}} }}{{\sqrt {{p_k}} }} +{\theta_k}=0,\label{Lar_prof2_1}\\
\frac{{\partial {\cal L}}}{{\partial {D_k}}} &= \frac{{{\gamma _k}\sum\nolimits_{i = 1}^N {\sqrt {{p_i}} {\gamma _i}} }}{{\sqrt {{p_k}} }} +{\xi_k}=0.\label{Lar_prof2_2}
\end{align}
From Case 1, we know all the power $p_k^ *>0, \forall k$. Combining $\gamma_k>0, {\xi_k}\ge 0, {\theta_k}\ge 0, \forall k$, the left-hand-side of the above equation is greater than zero, which contradicts with the KKT conditions. Thus, the optimal transmission power of the $k$-th RAU is $p_{max}$ for $\forall k \in \{1,\dots,N\}$.
\end{proof}

This theorem illustrates that if the harvested energy profile is good enough, not only will the user receive the maximum SWIPT, but also the smart grid will be power replenished. This theorem also reveals an interesting strategy in the optimal power allocation pattern, as summarized below.

\begin{lem}
Assuming $p_i=p_{max}$, $\forall i \in {1,\ldots,N}$,  if we can obtain that
\begin{align}\label{criterion}
\sum\limits_{i = 1}^N {{E_i}}  \ge N{p_{\max }} + \left( {1 - {\eta ^2}} \right){\sum\limits_{i \in G} {\left( {{E_i} - {p_{\max }}} \right)}},
\end{align}
where $G$ denotes the set of the RAUs whose EH rate is greater than the transmit power (i.e., grid-charging), then $p_i^*=p_{max}$, $\forall i \in {1,\ldots,N}$ is the optimal power allocation strategy.
\end{lem}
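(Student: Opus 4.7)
The statement is essentially a feasibility certificate: condition \eqref{criterion} is exactly what is needed to ensure that the full‑power allocation lies inside the feasible set defined by the green smart constraint. Once that is established, optimality comes for free, because the objective $\bigl(\sum_{i=1}^N \sqrt{p_i}\,\gamma_i\bigr)^2$ is strictly increasing in each $p_i$ on the box $[0,p_{\max}]$, so the unconstrained (in the green smart sense) maximizer is $p_i=p_{\max}$ for all $i$; if this point is already feasible, it must be the global optimum of \eqref{objective2}.

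\textbf{First step.} Plug $p_i=p_{\max}$ into the expressions for $C_i$ and $D_i$ given by \emph{Lemma 2}. For $i\in G$ (so $E_i>p_{\max}$) one gets $C_i=E_i-p_{\max}$ and $D_i=0$, hence $S_i=\eta(E_i-p_{\max})$. For $i\notin G$ (so $E_i\le p_{\max}$) one gets $C_i=0$ and $D_i=p_{\max}-E_i$, hence $S_i=-(p_{\max}-E_i)/\eta$. Summing and clearing the factor $1/\eta$ yields
\begin{equation*}
\sum_{i=1}^N S_i \;\ge\; 0 \;\Longleftrightarrow\; \eta^2 \sum_{i\in G}(E_i-p_{\max}) \;\ge\; \sum_{i\notin G}(p_{\max}-E_i).
\end{equation*}

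\textbf{Second step.} Using the identity $\sum_{i=1}^N E_i - N p_{\max} = \sum_{i\in G}(E_i-p_{\max}) - \sum_{i\notin G}(p_{\max}-E_i)$ to eliminate the "deficit" sum, the inequality above rearranges into exactly \eqref{criterion}. Thus \eqref{criterion} is algebraically equivalent to the statement that $p_i=p_{\max}$ for all $i$ satisfies the green smart constraint (together, trivially, with the box and non‑negativity constraints on $C_i,D_i$ given by the $[\cdot]^+$ formulas).

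\textbf{Third step.} Conclude optimality. Because $\bigl(\sum_{i}\sqrt{p_i}\,\gamma_i\bigr)^2$ is strictly increasing in each $p_i$, any feasible point with some $p_i<p_{\max}$ is dominated by the full‑power point, provided the latter is feasible. Step two showed it is feasible whenever \eqref{criterion} holds, so $p_i^*=p_{\max}$ for all $i$ is optimal. If the inequality in \eqref{criterion} is strict then $\sum_i S_i>0$ and we are in the smart‑grid profitable regime, so optimality is already guaranteed by \emph{Theorem 1}; the boundary case $\sum_i S_i=0$ either follows by continuity or by the same monotonicity argument applied to \eqref{objective2} directly.

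\textbf{Main obstacle.} Nothing deep is required; the only thing one must be careful about is the bookkeeping in the first step and making sure that the $[\cdot]^+$ splits given by \emph{Lemma 2} are compatible with the definition of $G$, so that condition \eqref{criterion} really does capture the green smart constraint at the candidate point rather than an unrelated inequality. The rest is a one‑line monotonicity argument.
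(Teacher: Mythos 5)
Your proof is correct, and the algebraic half coincides with the paper's: both show that \eqref{criterion} is exactly the condition $\sum_{i=1}^N S_i\ge 0$ evaluated at $p_i=p_{\max}$ with the charging/discharging split $C_i=[E_i-p_{\max}]^+$, $D_i=[p_{\max}-E_i]^+$ from \emph{Lemma 2} (the paper carries the passive set $PA$ separately, you fold it into $i\notin G$, which changes nothing). Where you diverge is the optimality step. The paper argues by \emph{reductio ad absurdum}: if some RAU had $p_j^*<p_{\max}$, the trade-state sum would become strictly positive, putting the optimum in the grid-profitable regime where \emph{Theorem 1} forces full power, a contradiction. You instead observe that the objective $\bigl(\sum_i\sqrt{p_i}\,\gamma_i\bigr)^2$ is increasing in each $p_i$, so the full-power point maximizes it over the whole box $[0,p_{\max}]^N$ and hence over any feasible subset containing that point; feasibility is precisely what \eqref{criterion} certifies. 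Your route is more self-contained and arguably cleaner: it does not lean on \emph{Theorem 1} (whose KKT-based proof the paper's contradiction implicitly re-invokes) and it dispenses with the boundary case $\sum_i S_i=0$ uniformly rather than handling it as a separate limiting case. The one point you rightly flag and should keep explicit is that the objective depends on $(C_i,D_i)$ only through $p_i$, and that the $[\cdot]^+$ split maximizes each $S_i$ for the given $p_i$, so restricting attention to that split loses no feasibility.
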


\begin{proof}
Since $\left\{ G \right\} + \left\{ L \right\} + \left\{ {PA} \right\} = \left\{ {1,2, \ldots ,N} \right\}$, where  the set $L$ denotes the RAUs whose harvesting rate is less than the transmit power (i.e., grid-discharging) and $\left\{ {PA} \right\}$ is the set for the RAUs whose harvesting rate is equal to the transmit power (i.e., grid-passive),  we rearrange \eqref{criterion} as
\begin{multline}
\underbrace {\sum\limits_{i = 1}^N {{E_i}}  - \sum\limits_{i \in G} {{E_i}} }_{\sum\limits_{i \in L} {{E_i}}+\sum\limits_{i \in PA} {{E_i}} } + {\eta ^2}\sum\limits_{i \in G} {{E_i}}  \ge \\
\underbrace {N{p_{\max }} - \sum\limits_{i \in G} {{p_{\max }}} }_{\sum\limits_{i \in L} {{p_{\max }}}+\sum\limits_{i \in PA} {{p_{\max }}}  } + {\eta ^2}\sum\limits_{i \in G} {{p_{\max}}}.
\end{multline}
Since $\sum\limits_{i \in PA} {{E_i}}=\sum\limits_{i \in PA} {{p_i}}=\sum\limits_{i \in PA} {{p_{\max }}}$, we have
\begin{align}
{\eta ^2}\sum\limits_{i \in G} {\left( {{E_i} - {p_{\max }}} \right)}  & \ge \sum\limits_{i \in L} {\left( {{p_{\max }} - {E_i}} \right)},\\
\sum\limits_{i \in G} {\eta \left( {{E_i} - {p_{\max }}} \right)} &  \ge \sum\limits_{i \in L} {\frac{{\left( {{p_{\max }} - {E_i}} \right)}}{\eta }}.\label{GL}
\end{align}
Note that grid-charging imposes that $C_i=E_i-p_i$ with $i\in G$. Similarly, grid-discharging imposes that $D_i=p_i-E_i$ with $i\in L$. Therefore, \eqref{GL} becomes
\begin{align}\label{CD}
\sum\limits_{i \in G} {\eta  {C_i} }  \ge \sum\limits_{i \in L} {\frac{{ {D_i} }}{\eta}}.
\end{align}
From \emph{Lemma 2}, it is known that the optimal power allocation policy for each RAU never enables grid charging and discharging simultaneously. Thus, we have $C_i=0, i \in \{L, PA\}$ and $D_i=0, i \in \{G, PA\}$. Now \eqref{CD}  becomes
\begin{align}
\sum\limits_{i \in N} {\eta  {C_i}}  & \ge \sum\limits_{i \in N} {\frac{{ {D_i}}}{\eta }},\\
\sum\limits_{i = 1}^N \left({{\eta  {C_i}} -{\frac{{ {D_i}}}{\eta }}}\right) & \ge 0,\\
\sum\limits_{i = 1}^N {{S_i}}  & \ge 0.
\end{align}
Now, the proof is replaced by that when $p_i=p_{max}$, $\forall i$,  if $\sum_{i = 1}^N {{S_i}}\ge 0$, $p_i^*=p_{max}$ is the optimal power strategy.

We prove this by {\em reductio ad absurdum}. Assume that when $p_i=p_{max}$, $\forall i$,  if $\sum\nolimits_{i = 1}^N {{S_i}}\ge 0$ and the optimal power strategy is not $p_i^*=p_{max}$, $\forall i \in {1,\ldots,N}$, that is to say, not all the RAUs transmit with the maximum power (at least one RAU transmits with the power that is lower than the maximum). Let us now say that $\exists j \in A$, where $A$ denotes the set of the RAUs whose transmission power is lower than the maximum power and is a non-empty set, i.e., $p_j<p_{max}$.

As when $p_i=p_{max}$, $\forall i \in {1,\ldots,N}$, $\sum\nolimits_{i = 1}^N {{S_i}}\ge 0$. From this, we know that when the optimal power allocation $\exists j \in A$ which enables $p_j^*<p_{max}$, the sum of the current trade state will be greater than zero ($\sum\nolimits_{i = 1}^N {{S_i}}> 0$). From \emph{Theorem 1}, we obtain that when the grid is profitable, the optimal strategy is that all the RAUs transmit with the maximum power $p_{max}$, which conflicts with the hypothetical proposition.
\end{proof}

This lemma indicates that the first step to find the optimal strategy is to test if the sum of the harvested energy $\sum\nolimits_{i = 1}^N {{E_i}}$ is greater than or equal to $N{p_{\max }} + \left( {1 - {\eta ^2}} \right) {\sum\nolimits_{i \in G} {\left( {{E_i} - {p_{\max }}} \right)} }$. If yes, the optimal power allocation is settled. Otherwise, the sum of the trade states will be less than zero, $\sum\nolimits_{i = 1}^N {{S_i}}<0$ with $p_i=p_{max}$, which conflicts with the slackness conditions \eqref{slackness_3}. Thus, we need to decrease the power of some RAUs to enable $\sum\nolimits_{i = 1}^N {{S_i}}=0$, which is solved in the next section.

\section{Optimal Design of Power Allocation for SCENARIO II: Smart Grid Neutral}
Based on the last section, we move towards our final policy. With ${\mathop \sum \nolimits_{i = 1}^N S_i}=0$, the KKT optimality conditions are found by taking the derivatives of \eqref{general} with respect to $C_k$ and $D_k$ for $k = \left\{ {1, \ldots N} \right\}$ as
\begin{align}\label{Lar_neu}
\frac{{\partial {\cal L}}}{{\partial {C_k}}} &= -\frac{{{\gamma _k}\sum\nolimits_{i = 1}^N {\sqrt {{p_i}} {\gamma _i}} }}{{\sqrt {{p_k}} }} - {\lambda _k} + {\nu _k}+{\theta_k}+{\mu}{\eta},\\
\frac{{\partial {\cal L}}}{{\partial {D_k}}} &= \frac{{{\gamma _k}\sum\nolimits_{i = 1}^N {\sqrt {{p_i}} {\gamma _i}} }}{{\sqrt {{p_k}} }} + {\lambda _k} - {\nu _k}+{\xi_k}- \frac{\mu}{\eta }.
\end{align}
By setting ${\partial {\cal L}} /{\partial {C_k} }={\partial {\cal L}} /{\partial {D_k} = 0}$, we obtain
\begin{subequations}
\begin{align}
\frac{{\sqrt {{p_k}} }}{\gamma_k} & = \frac{{\sum\nolimits_{i = 1}^N {\sqrt {{p_i}} {\gamma _i}} }}{{\mu}{\eta }- {\lambda _k} + {\nu _k}+{\theta_k}},\label{property_1}\\
&= \frac{{\sum\nolimits_{i = 1}^N {\sqrt {{p_i}} {\gamma _i}} }}{\frac{\mu}{\eta }- {\lambda _k} + {\nu _k}-{\xi_k}}.\label{property_2}
\end{align}
\end{subequations}
Similarly, the optimal allocation $p_k$ of the $k$-th RAU can be divided into three mutually exclusive cases according to the additional complimentary slackness conditions (\ref{slackness_1}):
\begin{align}\label{Three_cases_2}
\left( {p_k^ * ,\lambda _k^ * ,\nu _k^ *, \mu^*} \right) = \left\{ \begin{array}{l}
\left( {0,\lambda _k^ * ,0, \mu^*} \right),\\ [12pt]
\left( {p_k^ * ,0,0, \mu^*} \right)\left| {0 < p_k^ *  < {p_{\max }},} \right.\\[12pt]
\left( {{p_{\max }},0,\nu _k^ *, \mu^* } \right).
\end{array} \right.
\end{align}
First, we determine the properties of the optimal solution in the following three lemmas for the three cases.

\begin{lem}
For any $k$ and $j$, if the optimal power of the $k$-th grid-charging RAU is $p_{max}$, then the power for the $j$-th grid-charging RAU having better signal-to-noise ratio (SNR) than the $k$-th RAU is determined as $p_{max}$. If the optimal power of the $k$-th grid-discharging RAU is $p_{max}$, the power for the $j$-th RAU having better SNR than the $k$-th RAU is determined as $p_{max}$ regardless of the current trade state of the $j$-th RAU.
\end{lem}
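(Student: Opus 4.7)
The plan is to argue by contradiction using the KKT stationarity conditions \eqref{property_1}--\eqref{property_2}, paralleling the Case~1/2/3 split used in the proof of Theorem~1. I assume the optimal $p_j<p_{max}$ and aim to derive the impossible conclusion $p_j>p_{max}$. The sub-case $p_j=0$ is dismissed exactly as in Case~1 of Theorem~1: the term $\gamma_j\sum_i\sqrt{p_i}\gamma_i/\sqrt{p_j}$ would blow up and cannot be balanced by the finite Lagrange multipliers. So I may restrict to $0<p_j<p_{max}$, which via \eqref{slackness_1} forces $\lambda_j^*=\nu_j^*=0$. Throughout, the hypothesis $p_k=p_{max}$ likewise gives $\lambda_k^*=0$ and $\nu_k^*\geq 0$.

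For Part~1 (both $k$ and $j$ grid-charging), the grid-charging condition makes $\theta_k^*=\theta_j^*=0$, so I substitute the active multipliers into \eqref{property_1}: the denominator at $k$ becomes $\mu\eta+\nu_k^*$ while the denominator at $j$ becomes $\mu\eta$. Since $\nu_k^*\geq 0$, the ratio of the two expressions gives $\sqrt{p_j}/\gamma_j\geq\sqrt{p_{max}}/\gamma_k$, and multiplication by $\gamma_j>\gamma_k$ yields $\sqrt{p_j}>\sqrt{p_{max}}$, the desired contradiction.

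For Part~2 (grid-discharging $k$ with $p_k=p_{max}$, any trade state for $j$), the same template applies after using \eqref{property_2} at $k$, whose denominator is now $\mu/\eta+\nu_k^*$. I split the analysis of $j$ into three sub-cases: if $j$ is grid-charging, \eqref{property_1} gives denominator $\mu\eta$; if grid-discharging, \eqref{property_2} gives denominator $\mu/\eta$. In both, since $\eta<1$ and $\nu_k^*\geq 0$, the $j$-denominator is $\leq \mu/\eta+\nu_k^*$, and the same ratio calculation gives $p_j>p_{max}$, a contradiction.

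The main obstacle lies in the third sub-case, $j$ grid-passive, because $C_j=D_j=0$ permits both $\theta_j^*$ and $\xi_j^*$ to be nonzero, and neither \eqref{property_1} nor \eqref{property_2} alone pins down the effective denominator. My workaround is to subtract the two stationarity equations at $j$, yielding the identity $\theta_j^*+\xi_j^*=\mu(1/\eta-\eta)$. In particular $\theta_j^*\leq \mu(1/\eta-\eta)$, so the effective denominator in \eqref{property_1} at $j$ obeys $\mu\eta+\theta_j^*\leq \mu/\eta\leq \mu/\eta+\nu_k^*$, and the same ratio argument closes the proof.
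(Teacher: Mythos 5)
Your proof is correct and follows essentially the same route as the paper: a KKT contradiction that compares the stationarity ratios $\sqrt{p_j}/\gamma_j$ and $\sqrt{p_k}/\gamma_k$ through their multiplier denominators and invokes $\gamma_j>\gamma_k$ to force the impossible conclusion $p_j>p_{\max}$. You are in fact more complete than the paper in two places --- you work out the grid-discharging case for arbitrary trade state of $j$ explicitly rather than declaring it ``similar,'' and your identity $\theta_j^*+\xi_j^*=\mu\left(1/\eta-\eta\right)$ (obtained by \emph{adding}, not subtracting, the two stationarity equations) cleanly closes the grid-passive sub-case that the paper silently skips.
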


\begin{proof}
When $p_k^*=p_{max}$ and $p_k^*<E_k$ ($C_k>0$), combining the slackness conditions \eqref{slackness_1} and \eqref{slackness_2} gives
\begin{equation}
\frac{{\sqrt {{p_k}} }}{\gamma_k}  = \frac{{\sum\nolimits_{i = 1}^N {\sqrt {{p_i}} {\gamma _i}} }}{{\mu}{\eta } + {\nu _k}}.
\end{equation}
As $p_j^*<E_j$, we obtain
\begin{equation}
\frac{{\sqrt {{p_j}} }}{\gamma_j}  = \frac{{\sum\nolimits_{i = 1}^N {\sqrt {{p_i}} {\gamma _i}} }}{{\mu}{\eta } -\lambda_j + {\nu _j}}.
\end{equation}
\begin{itemize}
\item If $p_j=0$, we know $\nu_j=0$ from \eqref{slackness_1}, leading to
\begin{equation}
\frac{{\sqrt {{p_j}} }}{\gamma_j}  = \frac{{\sum\nolimits_{i = 1}^N {\sqrt {{p_i}} {\gamma _i}} }}{{\mu}{\eta } -\lambda_j }>\frac{{\sqrt {{p_k}} }}{\gamma_k}.
\end{equation}
Since ${\gamma_j}>{\gamma_k}$, $p_j$ should be greater than $p_k$ ($p_j>p_k=p_{max}$) to ensure the above relationship, which conflicts the original assumption $p_j=0$. As a result, the power allocation policy $p_j=0$ cannot be optimal.
\item If $0<p_j<p_{max}$, $\lambda_j=\nu_j=0$, leading to
\begin{equation}
\frac{{\sqrt {{p_j}} }}{\gamma_j}  = \frac{{\sum\nolimits_{i = 1}^N {\sqrt {{p_i}} {\gamma _i}} }}{{\mu}{\eta } }>\frac{{\sqrt {{p_k}} }}{\gamma_k}.
\end{equation}
As ${\gamma_j}>{\gamma_k}$, $p_j$ should be greater than $p_k$ ($p_j>p_k=p_{max}$) to ensure the above relationship, which conflicts the original assumption $0<p_j<p_{max}$. Hence, again, this policy $0<p_j<p_{max}$ cannot be optimal.
\end{itemize}
In summary, the optimal policy for the $j$-th grid-charging RAU is $p_j^*=p_{max}$. For the grid-discharging RAUs, the proof is similar to the grid-charging case. No matter what the current trade state between the $j$-th RAU and the smart grid is, if $p_j \ne p_{max}$, $\frac{{\sqrt {{p_j}} }}{\gamma_j}$ is always greater than ${\gamma_j}>{\gamma_k}$, and thus $p_j$ should be greater than $p_k$ ($p_j>p_k=p_{max}$), which conflict the assumption $p_j \ne p_{max}$ and completes the proof.
\end{proof}

Importantly, the RAU allocated with the maximum power is for the better channel gain because the energy loss in trade with the smart grid encourages spending the harvested energy directly at the current RAU to ``trade for" the data rate and energy transfer performance to avoid energy ``devaluation".

\begin{lem}
For any $k$ and $j$, if the optimal power of the $k$-th RAU is zero, the power for the $j$-th RAU having worse SNR than the $k$-th RAU is determined as 0.
\end{lem}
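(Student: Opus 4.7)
The plan is to prove the statement by contradiction, mirroring the structure of \emph{Lemma 3} but with the direction of the SNR inequality reversed. Suppose, for contradiction, that $p_k^*=0$ holds at the optimum while $p_j^*>0$ for some index $j$ with $\gamma_j<\gamma_k$; I will derive an inconsistency with the KKT stationarity conditions (\ref{property_1})--(\ref{property_2}).

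First, I would pin down the active multipliers at indices $k$ and $j$ from the complementary slackness conditions. Since $p_k^*=0<p_{\max}$, (\ref{slackness_1}) forces $\nu_k=0$ while $\lambda_k\ge 0$; since $p_k^*=0$ implies $D_k=0$ and $C_k=E_k$, \emph{Lemma 2} combined with (\ref{slackness_2}) gives $\theta_k=0$ (whenever $E_k>0$), so only $\lambda_k$ remains potentially active at index $k$. Dually, $p_j^*>0$ forces $\lambda_j=0$, and \emph{Lemma 2} kills one of $\theta_j,\xi_j$ according to the sign of $p_j^*-E_j$. Each of the remaining multipliers is non-negative and finite.

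Next, I would evaluate (\ref{property_1}) at index $k$. The left-hand side $\sqrt{p_k^*}/\gamma_k$ equals $0$, while the right-hand side is $\sum_{i=1}^N\sqrt{p_i^*}\gamma_i$ divided by a finite non-negative combination of multipliers, so the equation forces $\sum_{i=1}^N\sqrt{p_i^*}\gamma_i=0$. Because $\gamma_i>0$ and $p_i^*\ge 0$ for every $i$, this can hold only if $p_i^*=0$ for every $i$, in particular $p_j^*=0$, contradicting the supposition. A more Lemma 3 style route is to compare the two stationarity equations side by side: enumerating the sub-cases of (\ref{Three_cases_2}) for $p_j^*$ and substituting the multipliers just identified yields $\sqrt{p_j^*}/\gamma_j>\sqrt{p_k^*}/\gamma_k=0$; combined with $\gamma_j<\gamma_k$ and $p_j^*\le p_{\max}$, this forces $p_j^*=0$ in each sub-case.

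The main obstacle I anticipate is the sub-case bookkeeping for the grid-state of RAU $j$, because (\ref{property_1}) and (\ref{property_2}) activate different multipliers depending on whether $C_j$ or $D_j$ is positive. A related subtlety is that $p_k^*=0$ makes the first-order term $\gamma_k\sum_i\sqrt{p_i}\gamma_i/\sqrt{p_k}$ in the Lagrangian singular, so the derivation should be read as a statement about finite non-negative multipliers being unable to balance the gradient unless $\sum_i\sqrt{p_i^*}\gamma_i$ itself vanishes; once that limiting observation is handled cleanly, the contradiction closes in exactly the same spirit as the proof of \emph{Lemma 3}.
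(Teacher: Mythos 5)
Your multiplier bookkeeping at index $k$ matches the paper's: $p_k^*=0<p_{\max}$ gives $\nu_k=0$, and $C_k=E_k>0$ gives $\theta_k=0$ via \eqref{slackness_2}, leaving only $\lambda_k$ and the ratio $\sqrt{p_k}/\gamma_k=\bigl(\sum_i\sqrt{p_i}\gamma_i\bigr)/(\mu\eta-\lambda_k)$. From there, however, both of your routes go astray. Route B is the paper's actual argument, but you have the key inequality reversed: because RAU $j$'s denominator in \eqref{property_1} is $\mu\eta+\nu_j$ (if $p_j=p_{\max}$) or $\mu\eta$ (if $0<p_j<p_{\max}$), which is at least the $\mu\eta-\lambda_k$ appearing for RAU $k$, the side-by-side comparison yields $\sqrt{p_j}/\gamma_j\le\sqrt{p_k}/\gamma_k=0$, and it is this (not $\sqrt{p_j}/\gamma_j>0$) that forces $p_j=0$ and contradicts the supposition $p_j>0$. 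As written, ``$\sqrt{p_j^*}/\gamma_j>0$ combined with $\gamma_j<\gamma_k$ forces $p_j^*=0$'' is a non sequitur, since $\sqrt{p_j}/\gamma_j>0$ is exactly the statement $p_j>0$.

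Route A, which you present as the primary argument, proves far too much: if the stationarity equation at $k$ really forced $\sum_i\sqrt{p_i^*}\gamma_i=0$, then every RAU would be silent whenever any single one is, which contradicts the very structure the lemma is meant to establish and the paper's own use of it (in the $16$-RAU example, RAUs $15$--$16$ receive zero power while the others transmit). The way out is precisely not to demand that the singular equation hold exactly at $p_k=0$, but only to compare the two ratios as above; the paper does this, and then repeats the same denominator comparison for the grid-discharging sub-case of $j$ (where \eqref{property_2} gives denominator $\mu/\eta+\nu_j\ge\mu\eta-\lambda_k$ since $\eta<1$), a piece of bookkeeping you flag as an anticipated obstacle but do not actually carry out. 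So the proposal assembles the right ingredients, but the decisive step --- the direction of the denominator comparison --- is wrong, and the fallback argument is inconsistent with the optimal policy the lemma describes.
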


\begin{proof}
When $p_k^*=0$, we know that this RAU must be a grid-charging RAU ($C_k>0$) as the harvesting rate of the $k$-th RAU is certainly greater than zero. Combining the slackness conditions \eqref{slackness_1} and \eqref{slackness_2}, we have
\begin{equation}
\frac{{\sqrt {{p_k}} }}{\gamma_k}  = \frac{{\sum\nolimits_{i = 1}^N {\sqrt {{p_i}} {\gamma _i}} }}{{\mu}{\eta }-\lambda_k},
\end{equation}
where ${\nu _k}=\theta_k=0$. If $p_j<E_j$ ($C_j>0$), we know $\theta_j=0$ from \eqref{slackness_2}, leading to the following results.
\begin{itemize}
\item If $p_j=p_{max}$, we know $\lambda_j=0$ from \eqref{slackness_1} and have
\begin{equation}
\frac{{\sqrt {{p_j}} }}{\gamma_j}  = \frac{{\sum\nolimits_{i = 1}^N {\sqrt {{p_i}} {\gamma _i}} }}{{\mu}{\eta } +\nu_j} < \frac{{\sqrt {{p_k}} }}{\gamma_k}.
\end{equation}
As ${\gamma_j}<{\gamma_k}$, $p_j$ should be less than $p_k$ ($p_j<p_k=0$) to ensure the above relationship, which conflicts the original assumption $p_j=p_{max}$. In other words, this power allocation policy $p_j=p_{max}$ cannot be optimal.
\item If $0<p_j<p_{max}$, $\lambda_j=\nu_j=0$ from \eqref{slackness_1} and
\begin{equation}
\frac{{\sqrt {{p_j}} }}{\gamma_j}  = \frac{{\sum\nolimits_{i = 1}^N {\sqrt {{p_i}} {\gamma _i}} }}{{\mu}{\eta}} < \frac{{\sqrt {{p_k}} }}{\gamma_k}.
\end{equation}
As ${\gamma_j}<{\gamma_k}$, $p_j$ should be less than $p_k$ ($p_j<p_k=0$) to ensure the above relationship, which conflicts the original assumption $0<p_j<p_{max}$. Hence, this policy $0<p_j<p_{max}$ cannot be optimal.
\end{itemize}
Hence, the optimal policy for the $j$-th grid-charging RAU is $p_j^*=0$. For the grid-discharging RAUs, we can draw the same conclusion. These indicate that $p_j=0$ as long as ${\gamma_j}<{\gamma_k}$.
\end{proof}

When the updated channel gain is worse, the SWIPT performance achieved using the harvesting energy at the current RAU is lower than the one achieved by trading the energy with the other RAUs. This is similar to the traditional water-filling algorithm, where water finds its level when filled in a vessel with multiple openings until dripping the last drop of water, showing that the power is always allocated to the one with better channel gain to earn more worth.

\begin{lem}
For any $k$ and $j$, if the optimal power of the $k$-th grid-charging ($p_k^*<E_k$) RAU is greater than zero but less than $p_{max}$, the power for the $j$-th grid-charging RAU having worse SNR than the $k$-th RAU is determined as $p_j^* = \frac{{\gamma _j^2}}{{\gamma _k^2}}p_k^*$, and all the grid-charging RAUs have the same ratio between the power allocation and the updated channel gain, which can be shown as $\kappa_G=\frac{{\sqrt {{p_k}} }}{{{\gamma _k}}} = \frac{{\sqrt {{p_j}} }}{{{\gamma _j}}}=\frac{{\sum\limits_{i = 1}^N {\sqrt {{p_i}} {\gamma _i}} }}{{\mu \eta }}$. The grid-discharging RAUs have the same property, but at a lower ratio given by $\kappa_L=\frac{{\sqrt {{p_k}} }}{{{\gamma _k}}} = \frac{{\sqrt {{p_j}} }}{{{\gamma _j}}}= \frac{{\sum\limits_{i = 1}^N {\sqrt {{p_i}} {\gamma _i}} }}{{\mu /\eta }}={\eta^2}{\kappa_G}$.
\end{lem}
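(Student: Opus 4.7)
The plan is to read the ratio directly off the KKT stationarity equations \eqref{property_1} and \eqref{property_2} at the interior-optimum $k$-th RAU, and then bootstrap the common-ratio claim to the remaining same-state RAUs by a three-case elimination argument analogous to the one used in Theorem~1 and Lemmas~3--4.

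\emph{Step 1 (reading $\kappa_G$ off the $k$-th RAU).} For the grid-charging RAU $k$ with $0<p_k^*<p_{\max}$, slackness \eqref{slackness_1} yields $\lambda_k^*=\nu_k^*=0$, and slackness \eqref{slackness_2} yields $\theta_k^*=0$ because $C_k^*=E_k-p_k^*>0$. Substituting these three zeros into \eqref{property_1} immediately isolates the common ratio
\[
\kappa_G \;=\; \frac{\sqrt{p_k^*}}{\gamma_k} \;=\; \frac{\sum_{i=1}^N \sqrt{p_i^*}\,\gamma_i}{\mu\eta}.
\]

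\emph{Step 2 (propagating to worse-SNR grid-charging RAUs).} For any grid-charging RAU $j$ with $\gamma_j<\gamma_k$, I would eliminate the two boundary options in \eqref{Three_cases_2} before reapplying the Step~1 derivation. The option $p_j^*=0$ is ruled out exactly as in Case~1 of the proof of Theorem~1: the $1/\sqrt{p_j}$ factor in $\partial\mathcal{L}/\partial D_j$ would blow up and violate stationarity. The option $p_j^*=p_{\max}$ is ruled out by the contrapositive of Lemma~3: if a grid-charging RAU at worse SNR were saturated, then every grid-charging RAU with better SNR, in particular the $k$-th RAU, would also have to be saturated, contradicting $p_k^*<p_{\max}$. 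Hence $0<p_j^*<p_{\max}$, and Step~1 applies verbatim to give $\sqrt{p_j^*}/\gamma_j=\kappa_G$, i.e.\ $p_j^*=(\gamma_j/\gamma_k)^2\,p_k^*$.

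\emph{Step 3 (grid-discharging ratio).} At a grid-discharging RAU with $0<p^*<p_{\max}$, the slackness analysis annihilates $\lambda$, $\nu$ and $\xi$ (the last because $D^*>0$), so that \eqref{property_2} yields $\sqrt{p^*}/\gamma=\bigl(\sum_i\sqrt{p_i^*}\gamma_i\bigr)/(\mu/\eta)$. Comparing with the value of the numerator computed in Step~1 gives $\kappa_L=\eta^2\kappa_G$ in a single line. The same two-sided boundary elimination as in Step~2 extends this identity to every interior grid-discharging RAU; here one invokes the trade-state-agnostic second half of Lemma~3 to discard $p_j^*=p_{\max}$ for a worse-SNR grid-discharging neighbour.

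\emph{Main obstacle.} The algebra is uniformly routine; the only delicate point is the boundary elimination, where Lemma~3 must be applied in its correct grid-charging form in Step~2 and in its trade-state-agnostic form in Step~3, so as not to over-claim monotonicity across RAUs in different trade states. Once the contrapositives are set up with the right SNR ordering, the rest of the proof reduces to substitution into the KKT identities already listed in the paper.
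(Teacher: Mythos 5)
Your proposal is correct in substance and uses the same core mechanism as the paper: complementary slackness kills $\lambda_k,\nu_k$ and $\theta_k$ (resp.\ $\xi_k$) at an interior same-trade-state RAU, and the stationarity identities (\ref{property_1})--(\ref{property_2}) then read off $\kappa_G$ and $\kappa_L=\eta^2\kappa_G$ directly. Where you differ is in how the boundary options for the $j$-th RAU are eliminated. The paper does not invoke any earlier lemma: it writes $\sqrt{p_j}/\gamma_j=\bigl(\sum_i\sqrt{p_i}\gamma_i\bigr)/(\mu\eta-\lambda_j+\nu_j)$, notes that $p_j\neq 0$ forces $\lambda_j=0$ so that $\sqrt{p_j}/\gamma_j\le\sqrt{p_k}/\gamma_k$ (because $\nu_j\ge 0$), and then uses $\gamma_j<\gamma_k$ to conclude $p_j<p_k<p_{\max}$, hence $\nu_j=0$ and equality; this is a self-contained one-line argument, whereas your route through the contrapositive of the max-power propagation lemma is equally valid and structurally cleaner at the cost of a dependency. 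Two caveats. First, fix your references: the max-power propagation result you invoke is Lemma~4 in the paper's numbering (Lemma~3 is the closed-form full-power criterion), and the zero-power propagation result is Lemma~5. Second, your elimination of $p_j^*=0$ by the ``$1/\sqrt{p_j}$ blows up'' argument proves too much in Scenario~II: the identical formal contradiction would forbid zero power at \emph{every} RAU, yet the paper's Lemma~5, Algorithm~2 and the $N=16$ worked example all exhibit optimal zero-power RAUs. The tension is resolved by observing that the objective is not differentiable at $p=0$, so stationarity cannot be invoked there; the paper sidesteps this by simply conditioning on $p_j\neq 0$ inside this lemma and handling zero-power RAUs separately by direct objective comparison. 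You should do the same rather than claim to have excluded that case.
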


\begin{proof}
For $0<p_k^*<p_{max}$, we obtain $\lambda_k=\nu_k=0$. When the grid is charging, it is clear that $\theta_k=0$. From \eqref{property_1}, we then have
\begin{equation}
\frac{{\sqrt {{p_k}} }}{{{\gamma _k}}} = \frac{{\sum\limits_{i = 1}^N {\sqrt {{p_i}} {\gamma _i}} }}{{\mu \eta }}.
\end{equation}
When $p_j<E_j$, we get
\begin{equation}
\frac{{\sqrt {{p_j}} }}{{{\gamma _j}}} = \frac{{\sum\limits_{i = 1}^N {\sqrt {{p_i}} {\gamma _i}} }}{{\mu \eta }-\lambda_j+\nu_j}.
\end{equation}
If $p_j \neq 0 $, then $\lambda_j=0$ and
\begin{equation}
      \frac{{\sqrt {{p_j}} }}{{{\gamma _j}}} = \frac{{\sum\limits_{i = 1}^N {\sqrt {{p_i}} {\gamma _i}} }}{{\mu \eta }+\nu_j} \le \frac{{\sqrt {{p_k}} }}{{{\gamma _k}}}
\end{equation}
for $\gamma_j<\gamma_k$, so $p_j<p_k<p_{max}$ and $\nu_j=0$. Then we have
\begin{equation}
      \kappa_G=\frac{{\sqrt {{p_k}} }}{{{\gamma _k}}} = \frac{{\sqrt {{p_j}} }}{{{\gamma _j}}}=\frac{{\sum\limits_{i = 1}^N {\sqrt {{p_i}} {\gamma _i}} }}{{\mu \eta }}.
\end{equation}
From ${{\sqrt {{p_k}} }}/{{{\gamma _k}}} = {{\sqrt {{p_j}} }}/{{{\gamma _j}}}$, we obtain $p_j^* = \frac{{\gamma _j^2}}{{\gamma _k^2}}p_k^*$.
Similarly for the case $p_k>E_k$, $p_j>E_k$, we can obtain
\begin{equation}
\kappa_L=\frac{{\sqrt {{p_k}} }}{{{\gamma _k}}} = \frac{{\sum\limits_{i = 1}^N {\sqrt {{p_i}} {\gamma _i}} }}{\left(\frac{\mu}{\eta}\right)}.
\end{equation}
For $\eta<1$, thus $\kappa_L={\eta^2}{\kappa_G} <\kappa_G$.
\end{proof}

Due to Jensen's inequality, we also have
\begin{equation}\label{Jensen}
  {\left( {\sum\limits_{k = 1}^N {\sqrt {{p_k}} {\gamma _k}{\ell _k}} } \right)^2} \leq
   \sum\limits_{k = 1}^N {{\ell _k}{{\left( {\sqrt {{p_k}} {\gamma _k}} \right)}^2}},
\end{equation}
where $\sum\nolimits_{k = 1}^N {\ell _k}=1$ and equality holds if and only ${\sqrt {{p_1}} {\gamma _1}}={\sqrt {{p_2}} {\gamma _2}}=\cdots={\sqrt {{p_N}} {\gamma _N}}$. In other words, the optimal energy strategy without constraints is equal to ${\sqrt {{p_k}} {\gamma _k}}$ power allocation. Note that this policy is modified by the constraints and the electric power transmission efficiency, which leads to double thresholds $\kappa_G$ and $\kappa_L$.

We also note that when $0 < p_k < p_{max}$, from the equality in \eqref{property_1},  we have ${{\sqrt {{p_k}} }}/{{{\gamma _k}}}\le {\kappa_G}$ since $\theta_k \ge 0$. Similarly, from the equality in \eqref{property_2}, we have ${{\sqrt {{p_k}} }}/{{{\gamma _k}}}\ge {\kappa_L}$ since $\xi_k \ge 0$. Therefore, for $0 < p_k < p_{max}$, we have
\begin{align}\label{relation}
    {\kappa_L} \le p_k \le {\kappa_G}.
\end{align}
We observe from {\em Lemma 2} that we have either $C_i>0$ and $D_i=0$, or $C_i=0$ and $D_i>0$, or $C_i=0$ and $D_i=0$. When $C_i=D_i=0$, from \eqref{Power}, we have $p_i=E_i$ which must satisfy \eqref{relation}. These show that there is a double-threshold policy on $p_i$. Specifically, when the grid is being charged, the transmit power equals the charging threshold $\kappa_G$; and when the grid is being discharged, the transmit power equals the discharging threshold $\kappa_L$. If the grid is neither being charged nor discharged, i.e., passive grid, $p_i=E_i$, or the transmitter uses up all the harvested energy at the current RAU.

\begin{theo}\label{thresholds}
The power policy solving \eqref{objective2} has the following double-threshold structure:
\begin{itemize}
  \item If the grid is charging, i.e., $E_k>\gamma_k^2\kappa_G^2$, $p_k=\gamma_k^2\kappa_G^2$.
  \item If the grid is discharging, i.e., $E_k<\gamma_k^2\kappa_L^2$, $p_k=\gamma_k^2\kappa_L^2$.
  \item If the grid is passive, i.e., $\gamma_k^2\kappa_L^2 \le E_k \le \gamma_k^2\kappa_G^2$, $p_k=E_k$.
\end{itemize}
\end{theo}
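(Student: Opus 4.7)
The plan is to derive the double-threshold structure by assembling three ingredients already established in the paper: Lemma 2, Lemma 5, and the bracketing inequality \eqref{relation}. No new variational analysis is needed; each of the three cases in the theorem can be read off by matching one of the three disjoint modes identified by Lemma 2 to the threshold characterisation of $p_k$ valid in that mode.

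First I would dispatch the two active modes. By Lemma 2, every RAU falls into exactly one of grid-charging ($C_k>0$, $D_k=0$), grid-discharging ($D_k>0$, $C_k=0$), or grid-passive ($C_k=D_k=0$). For a grid-charging RAU, Lemma 5 gives $\sqrt{p_k}/\gamma_k=\kappa_G$, hence $p_k=\gamma_k^2\kappa_G^2$, and the defining inequality $C_k=E_k-p_k>0$ becomes exactly $E_k>\gamma_k^2\kappa_G^2$. The grid-discharging case is symmetric: Lemma 5 forces $p_k=\gamma_k^2\kappa_L^2$, and $D_k=p_k-E_k>0$ is equivalent to $E_k<\gamma_k^2\kappa_L^2$. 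For the passive mode I would use \eqref{Power} to conclude $p_k=E_k$, and then insert $p_k=E_k$ into the bracketing inequality \eqref{relation}, itself derived from $\theta_k,\xi_k\ge 0$ in \eqref{property_1} and \eqref{property_2}, to obtain $\gamma_k^2\kappa_L^2\le E_k\le\gamma_k^2\kappa_G^2$. Since Lemma 5 also gives $\kappa_L=\eta^2\kappa_G<\kappa_G$ for $\eta<1$, the three intervals in $E_k$ partition the feasible range, so the classification is both exhaustive and mutually exclusive.

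The main obstacle, such as it is, lies not in the three-way algebra above but in justifying that the corner cases $p_k\in\{0,p_{\max}\}$ are consistent with the stated thresholded expressions. I would dispose of these by appealing to Lemmas 3 and 4: Lemma 3 guarantees that any RAU saturating at $p_{\max}$ has the largest $\gamma_k$ among those in its mode and can be absorbed into the high end of the active branch, while Lemma 4 does the analogous job at $p_k=0$. With these boundary corrections in place, the complementary slackness condition \eqref{slackness_3} on $\sum_i S_i=0$ reduces to a single scalar equation in $\kappa_G$ (with $\kappa_L=\eta^2\kappa_G$), whose left-hand side is monotone in $\kappa_G$; this pins the pair of thresholds uniquely and completes the characterisation in the theorem.
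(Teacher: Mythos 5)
Your proposal is correct and follows essentially the same route as the paper, which states Theorem~2 without a separate proof precisely because it is the direct assembly of Lemma~2's charging/discharging/passive trichotomy, the common-ratio characterisation $\sqrt{p_k}/\gamma_k=\kappa_G$ (resp.\ $\kappa_L=\eta^2\kappa_G$) for interior powers, and the bracketing relation \eqref{relation} together with \eqref{Power} for the passive case. Note only that your lemma numbers are shifted by one relative to the paper: the equal-ratio result you rely on is Lemma~6 (not Lemma~5), and the full-power and zero-power ordering properties you invoke for the corner cases $p_k\in\{0,p_{\max}\}$ are Lemmas~4 and~5 (not Lemmas~3 and~4).
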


In summary, \textit{Theorem \ref{thresholds}} shows that the optimal policy for the case $p_k^*\left|_ {0 < p_k^* < p_{\max }}\right.$ can be calculated by
\begin{equation}\label{threshold_eq}
   p_{k}^*  = \left\{
      \begin{array}{ll} 
          \gamma_k^2(\kappa_G^*)^2, & \mbox{if }E_k > \gamma_k^2(\kappa_G^*)^2,\\
          \gamma_k^2(\kappa_L^*)^2, & \mbox{if }E_k < \gamma_k^2(\kappa_L^*)^2,\\
          E_k,  & \mbox{otherwise}.
      \end{array} \right.
\end{equation}
Based on \emph{Lemmas 4--6} and \emph{Theorem 2}, we can characterize the optimal policy. To find the entire policy, let us first consider the case $\left( {p_k^ * ,\lambda _k^ * ,\nu _k^ *, \mu^*} \right) =\left( {p_k^ * ,0,0, \mu^*} \right)\left| {0 < p_k^ *  < {p_{\max }}} \right.$. To obtain the optimal power allocation of this case, we need to find the thresholds ${\kappa_G^*}$ and ${\kappa_L^*}$ for $\forall k \in {1,\ldots,N}$ in light of \emph{Theorem 2}. As the relationship between ${\kappa_G}$ and ${\kappa_L}$,  $\kappa_L=\eta^2{\kappa_G}$ based on \emph{Lemma 5}, we only need to find the threshold ${\kappa_G}$, which will be realized by a one-dimensional linear search. We can continuously increase the value of $\kappa_G$ until the sum trade states below reach to zero:
\begin{equation}
  \sum\limits_{k = 1}^N {{S_k}}  = \sum\limits_{k = 1}^N {\left( {\eta {{\left[ {{E_k} - \gamma _k^2\kappa _G^2} \right]}^ + } - \frac{1}{\eta}{\left[ {\gamma _k^2\kappa _L^2 - {E_k}} \right]}^ +} \right) = 0}.
\end{equation}
Here, it should be noted that the power cannot exceed $p_{max}$. Therefore, we should check that if there is any power greater than and equal to $p_{max}$ based on the value of $\kappa_G$, then set them to $p_{max}$ and remove them from the searching set. From \emph{Lemma 5}, we can quickly find out all the RAUs which will be allocated maximum power, and of course these RAUs will also be removed from the searching set. Since the power for the rest of RAUs (denoted as set $\{{Re}\}$) need to be redecided, we then determine the solutions in a similar way for the rest of RAUs with the new sum constraint of the trade states
\begin{equation}\label{newsum}
  \sum\nolimits_{k = 1\atop k \ne j}^N {{S_k}}  + \sum\nolimits_{{j \in N\left| {{p_j} = {p_{\max }}} \right.} } {{S_j} = 0},
\end{equation}
where $S_j=\left( {\eta {{\left[ {{E_j} - {p_{\max }}} \right]}^ + } - {{\left[ {{p_{\max }} - {E_j}} \right]}^ + }/\eta } \right)$. We repeat this  until there is no power greater than and equal to $p_{max}$ for the rest of RAUs. Remarkably, in this process, we can find out which RAUs will be allocated full power (referred to as \emph{Algorithm 1}). Next, it remains to examine whether there is zero-power RAU or not (\emph{Algorithm 2}). From \emph{Lemma 4}, we know that if the last RAU is not allocated zero power, then there is no zero-power RAU. Therefore, we only check the last RAU for each loop by comparing the objective function of having or not having zero power RAU.  After this examination, the optimal policy is found. Combining with \emph{Lemma 3}, we have the overall algorithm summarized in \emph{Algorithm 3}.

\begin{algorithm}[!t]
\caption{Finding the Full-Power RAUs}
\label{alg1}
\begin{algorithmic}[1]
\firstcon ~~\\
$S_{Max}$, $\{Re\}$
\ENSURE
\STATE For all $k \in \{Re\}$;
\IF {$p_{Re(k)} \ge p_{max}$}
\STATE ${\{Re\}}\leftarrow {\{Re\}}-{\{Re(k)\}}$;
\STATE Set {$p_{Re(k)} = p_{max}$};
\STATE $S_{Max}\leftarrow S_{Max} +\eta{C_{Re(k)}} -{D_{Re(k)}}/{\eta}$;
\IF {$p_{Re(k)} > E_{Re(k)}$}
\STATE For all $j<k$;
\IF {$E_{Re(j)} > p_{max}$}
\STATE ${\{Re\}}\leftarrow {\{Re\}}-{\{Re(j)\}}$;
\STATE Set {$p_{Re(j)} = p_{max}$};
\STATE $S_{Max}\leftarrow S_{Max} +\eta{C_{Re(j)}} -{D_{Re(j)}}/{\eta}$;
\ENDIF\\
\ELSIF  {$p_{Re(k)} < E_{Re(k)}$}
\IF {$E_{Re(j)} \ne p_{max}$}
\STATE  ${\{Re\}}\leftarrow {\{Re\}}-{\{Re(j)\}}$;
\STATE Set {$p_{Re(j)} = p_{max}$};
\STATE $S_{Max}\leftarrow S_{Max} +\eta{C_{Re(j)}} -{D_{Re(j)}}/{\eta}$;
\ENDIF\\
\ENDIF\\
\ENDIF\\
\lastcon ~~\\
$S_{Max}$, $\{Re\}$
\end{algorithmic}
\end{algorithm}

\begin{algorithm}[!t]
\caption{Finding the Zero-Power RAUs}
\label{alg2}
\begin{algorithmic}[1]
\firstcon ~~\\
$S_{Max}$, $\{Re\}$, $p_{Re}$, $N_{Re}$
\REQUIRE ~~\\
Set $Target=0$, $\bar{p}_{Re}=p_{Re}$, $\bar{S}_{Max}=S_{Max}$;
\ENSURE
\STATE Set $k = N_{Re}$
\STATE Set $\bar{p}_{Re(k)}=0$;
\STATE $\bar{S}_{Max}\leftarrow \bar{S}_{Max} +\eta{E_{Re(k)}}$;
\STATE Begin from RAU $j=1, j\in {\{Re\}-\{Re(k)\}}$, find the thresholds $\kappa_G$ and $\kappa_L$ that enable $\sum_{k \in {\{Re\}}}S_k+\bar{S}_{Max}=0$ with the \textit{Theorem \ref{thresholds} } by a one-dimensional search in \cite{Yuan} or by dynamic programming \cite{DP};
\WHILE{$j \le N_{Re}$}
\STATE Compute $\bar{p}_{Re(j)}$ using equation (\ref{threshold_eq});
\STATE $j\leftarrow j+1$;
\ENDWHILE\\
\STATE $Target$ is equal to the calculation value of \eqref{objective2_1} with respect to the new $\bar{p}_{Re(j)}$;
\lastcon ~~\\
$Target$, $\bar{p}_{Re}$, $\bar{S}_{Max}$
\end{algorithmic}
\end{algorithm}

\begin{algorithm}[!t]
\caption{Optimal Double-Threshold DAS with SWIPT}
\label{alg3}
\begin{algorithmic}[1]
\REQUIRE ~~\\
Set $k=1$, $S_{Max}=0$, $\{Re\}=\{1,2,\ldots,N\}$, $N_{Re}=length(Re)$, $Target_{opt}=0$;

\ENSURE
\IF {$\sum\limits_{i = 1}^N {{E_i}}  \ge N{p_{\max }} + \left( {1 - {\eta ^2}} \right)\left[ {\sum\limits_{i \in G} {\left( {{E_i} - {p_{\max }}} \right)} } \right]$}
\WHILE{$k \le N$}
\STATE $p_k^* = p_{max}$;
\STATE $k\leftarrow k-1$;
\ENDWHILE\\
\ELSE
\STATE Begin from RAU $k=1, k\in {\{Re\}}$, find the thresholds $\kappa_G$ and $\kappa_L$ that enable $\sum_{k \in {\{Re\}}}S_k+S_{Max}=0$ with the \textit{Theorem \ref{thresholds} } by a one-dimensional search in \cite{Yuan} or by dynamic programming \cite{DP};
\WHILE{$k \le N_{Re}$}
\STATE Compute $p_{Re(k)}$ using equation (\ref{threshold_eq});
\STATE $k\leftarrow k+1$;
\ENDWHILE\\
\STATE do \emph{Algorithm 1};
\IF {$N_{Re}=length(Re)$}
\STATE $Target_{opt}$ is equal to the calculation value of \eqref{objective2_1} with respect to $p_{Re(k)}$;
\STATE do \emph{Algorithm 2};
\IF {$Target>Target_{opt}$}
\STATE $Target_{opt}=Target$, $p_{Re}=\bar{p}_{Re}$, $S_{Max}=\bar{S}_{Max}$
\STATE Go to Step 15;
\ELSE
\STATE $p^*=p$;
\STATE Step out of the iteration;
\ENDIF
\ELSE
\STATE Set $N_{Re}=length(Re)$
\STATE Go to Step 7;
\ENDIF\\
\ENDIF\\
\lastcon ~~\\
\STATE $p_k^*, \forall k \in \{1, 2, \ldots, N\}$;
\end{algorithmic}
\end{algorithm}

It is emphasized that our proposed algorithm is applicable to DAS regardless of the number of RAUs $N$ and the number of antennas per RAU $M$. For illustration, consider a DAS system that has $16$ RAUs with $4$ antenna each, for which the electric power transmission efficiency is $\eta=0.8$ and the per-RAU power constraint is $p_{max}=5$. The harvested power of each RAU is given by $E =[6,2,6,4,1,1,4,5,1,1,4,8,1,8,1,4]$ and the corresponding updated channel gain is in descending order. We employed \emph{Algorithm 3} to determine the optimal power allocation with results as depicted in Fig.~\ref{example_for_16}. In this chart, the optimally allocated powers are indicated by stems while the corresponding threshold levels are shown by bars. Grid-charging, discharging and passive RAUs are denoted by the blue, green and yellow bars, respectively. For the case $\left( {p_k^ * ,\lambda _k^ * ,\nu _k^ *, \mu^*} \right) =\left( {p_k^ * ,0,0, \mu^*} \right)\left| {0 < p_k^ *  < {p_{\max }}} \right.$ (i.e., $(5\text{-}14)$-th RAU), we note that the charging RAUs (i.e., $(11, 12, 14)$-th RAU) have the same threshold $\kappa_G$ power roughly equal to $22.7445$. RAUs $(5,6,9,10)$ on the other hand undergo the discharging process, with the same lower threshold $\kappa_L$ roughly equal to $14.5565$. The power allocation of RAUs $(7,8,13)$ is shown between charing and discharging one, which is called grid-passive. We also observe that the full-power RAUs (i.e., $(1\text{-}4)$-th RAU) and zero-power RAUs (i.e., $(15,16)$-th RAU ) accord with the obtained results in \emph{Lemmas 4 and 5}.

\begin{figure}[!t]
\centering
\includegraphics[scale=0.45]{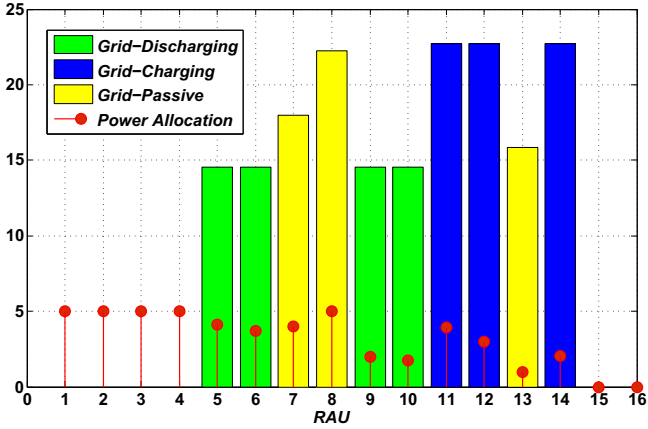}
\caption{Example illustrating the optimal policy with thresholds $\kappa_G$ and $\kappa_L$, with the number of RAUs $N=16$, the number of antennas per RAU $M=4$, and the electric power transmission efficiency $\eta=0.8$.}\label{example_for_16}
\end{figure}

Based on the above, now we can obtain the maximum WIT performance with the value of the splitting power ratio which is calculated by the minimum WET constraint as
\begin{equation}
  \rho  = 1 - \frac{{{Q^{\min }}}}{{\xi \left({{\left( {\sum\limits_{i = 1}^N {\sqrt {p_i^*}{\gamma_i}}} \right)^2}+\sigma^2}\right)}}.
\end{equation}
By substituting this into \eqref{ID}, the whole problem is solved.

\section{Numerical Results}
In this section, we provide the simulation results to evaluate the performance of the proposed algorithm under different settings. We considered the optimal value of \eqref{objective2_1} as the SWIPT performance. We assume Rayleigh fading channels and in the simulations, we provide the results by averaging over $1000$ independent randomly generated channel realizations and $1000$ randomly corresponding energy harvested realizations for each point with fixed $\alpha=2$. We also assume that all the RAUs are randomly and uniformly deployed in the distance from the user of range $(10,50)$ meters. Each RAU has a random energy arrival uniformly distributed over $[1,8]$, denoted as ${\mathcal U}(1,8)$.

We start by examining the SWIPT performance as a function of $N$ for DAS with a different number of antennas per RAU in Fig.~\ref{example_for_N_M1234}. The per-RAU power constraint is set as $p_{max}=5$ and the electric power transmission efficiency is set as $\eta=0.8$. It is shown that we can achieve higher SWIPT as $N$ increases and the performance gap becomes larger with the number of antennas. This is due to the fact that when the number of RAUs is small, the average distance between the RAU and the user will be large, and the large-scale fading will correspondingly be large, thereby inferior performance from the start. As the the number of RAUs increases, the average distance becomes shorter and the performance will increase significantly.

\begin{figure}[!t]
\centering
\includegraphics[scale=0.52]{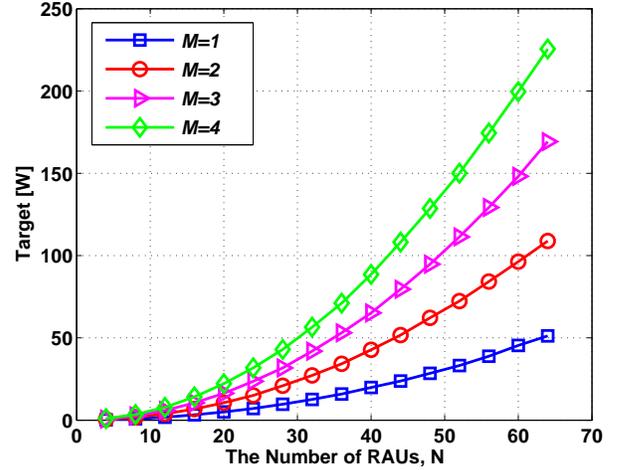}
\caption{The WET performance versus the number of RAUs and the number of antennas per RAU with $E_k \in {\mathcal U}(1,8), \forall k$.}\label{example_for_N_M1234}
\end{figure}

Fig.~\ref{example_for_N_pmax3456} illustrates the SWIPT performance with different per-RAU power constraints. The number of antennas per RAU is set as $M=4$ and the electric power transmission efficiency is set as $\eta=0.8$. The SWIPT performance of our algorithm gradually improves and the performance increases slowly as $p_{max}$ increases. This is because all the power allocated will be smaller than $p_{max}$ with high probability, and as a result, $p_k^*$'s will not change any more. This is especially the case when the required SINR is lower.

\begin{figure}[!t]
\centering
\includegraphics[scale=0.52]{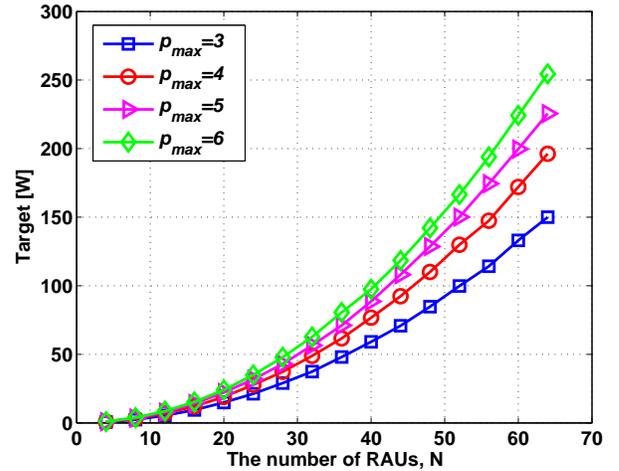}
\caption{The WET performance versus the number of RAUs and the per-RAU power constraint with $E_k \in {\mathcal U}(1,8), \forall k$.}\label{example_for_N_pmax3456}
\end{figure}

Fig.~\ref{example_for_N_ee57910} shows the SWIPT performance for different electric power transmission efficiency. The number of antennas per RAU is set as $M=4$ while the per-RAU power constraint is set as $p_{max}=5$. We see that through our optimal proposed policy, even if the electric power transmission efficiency is small, the SWIPT will not decrease much. We also observe that the SWIPT performance of $\eta=0.9$ will be very close to the perfect case $\eta=1$. This is because that we choose to share the energy to achieve more benefit with the modified Jensen's inequality policy. When the electric power transmission efficiency is small, the optimal policy will use the energy harvested directly instead of sharing it to avoid energy loss.

\begin{figure}[!t]
\centering
\includegraphics[scale=0.52]{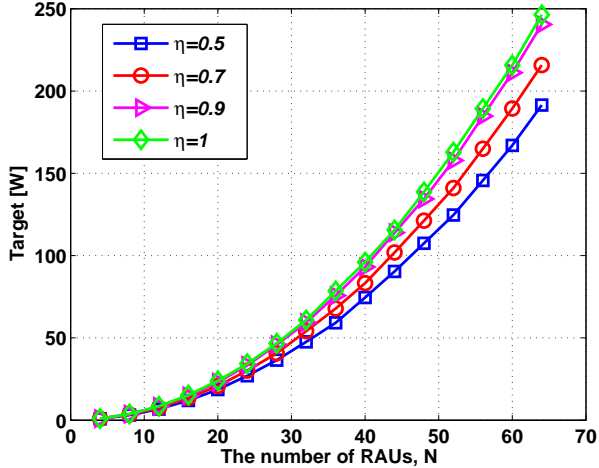}
\caption{The WET performance versus the number of RAUs and the electric power transmission efficiency with $E_k \in {\mathcal U}(1,8), \forall k$. }\label{example_for_N_ee57910}
\end{figure}

To provide some comparisons to help understand the performance of our proposed scheme, we consider several suboptimal transmission policies as banchmearks in Fig.~\ref{example_for_N_ee57910}:
\begin{itemize}
\item Greedy---The EH device directly fuels the RAU if energy is sufficient for maximum power transmission and the excessive energy charges the grid; otherwise the RAU uses up all the current harvested energy. Then the grid discharges the charging energy to the RAU which transmits with power lower than $p_{max}$ but has the best update channel gain among the non-maximum power RAUs. The grid continues to discharge any charging energy left.
\item Water-filling---An adaptive water-filling policy is found by adapting the power to the undate channel gain $\gamma$. The power allocated to each RAU is calculated by
\begin{equation}
{p_k} = \min \left( {{p_{\max }},{{\left[ {\varsigma  - \frac{1}{{{\gamma _k}}}} \right]}^ + }} \right),
\end{equation}
where the cutoff water level $\varsigma$ is calculated as the solution of the following equation:
\begin{equation}
\sum\limits_{k = 1}^N {{S_k}}  = \sum\limits_{k = 1}^N \left(\eta \left[E_k - p_k\right]^ + - \frac{1}{\eta}\left[p_k - E_k\right]^+\right)=0.
\end{equation}
\end{itemize}

Results in Fig.~\ref{comparison} indicate that the proposed optimal policy outperforms the other two schemes, regardless of the number of RAUs and the electric power transmission efficiency. Note that the water-filling algorithm performs worse if the electric power transmission efficiency is low $\eta=0.8$ because it does not take $\eta$ into account which leads to large energy loss due to sharing. Also, as expected, the greedy policy performs slightly worse than the optimal one. That is, in the low $\eta$ case, there will be fewer energy sharing among the RAUs to avoid energy loss. The greedy policy itself mainly focuses on using up the current harvesting energy first. Thus, these two strategies have similar management and performance.

\begin{figure}[!t]
\centering
\includegraphics[scale=0.52]{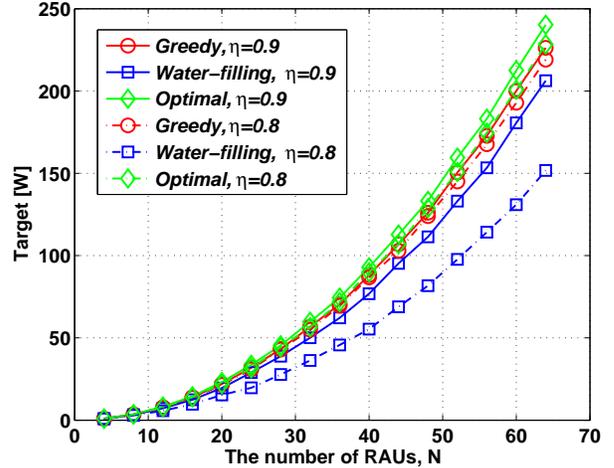}
\caption{Comparison of the SWIPT performance with different policies versus the number of RAUs, with $p_{max}=5$, $M=4$, $E_k \in {\mathcal U}(1,8), \forall k$.}\label{comparison}
\end{figure}

Finally, we provide the WIT-WET region versus the PS ratio $\rho$. It is assumed that $\zeta=0.5$, $\sigma^2=1$ , and $\tau^2=1$. The PS ratio increases from left to right in the figure with the range $[0,1]$. We observe that with $\rho$ increasing, the WIT (rate) improves with the WET performance drops. It is shown that when the number of the RAUs is large, the increasing rate of the WIT performance is faster than WET's.

\begin{figure}[!t]
\centering
\includegraphics[scale=0.52]{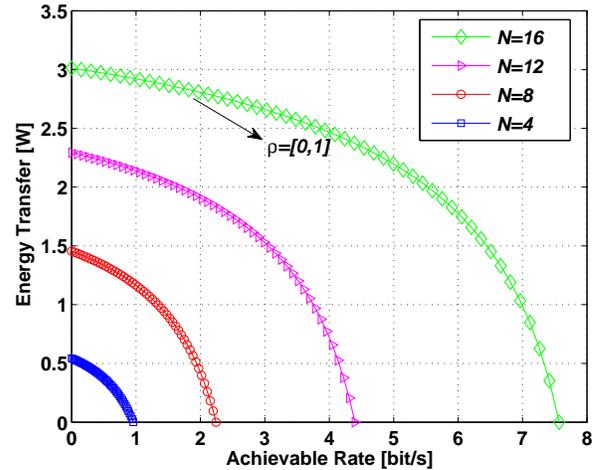}
\caption{The WIT-WET region versus the PS ratio, with $p_{max}=5$, $\eta=0.8$, $M=4$, $E_k \in {\mathcal U}(1,8), \forall k$. }\label{Region}
\end{figure}

\section{Conclusion}
This paper investigated the optimal energy cooperation policy for SWIPT DAS. Optimization was done in the framework of WIT maximization, subject to minimum WET constraint as well as energy causality and green smart constraints. From the WIT and WET formulation, we showed that the optimization can be solved by maximizing ${| {\sum\nolimits_{i = 1}^N {\sqrt {{p_i}} } \mathbf{g}_{i}^H{\mathbf{w}_{i}}} |^2}$, first with the PS ratio $\rho$ fixed and then adjusting it in accordance with the WET constraint. It was also revealed that the former problem can be solved by dividing the green smart constraint into the smart grid profitable, and the smart grid neutral cases. A full-power transmission strategy was derived in the former case. As to the latter one, we demonstrated that the optimal policy takes one of the three power allocation forms: maximum power allocation, zero power allocation, and a mix between the two. Each form has its property, especially for the last one, which has a double-threshold structure. Based on this, we proposed a double-threshold strategy to solve the entire problem and provided an algorithm to efficiently find the solution. Numerical results were presented to validate the theoretical analysis and to demonstrate the superior performance of the optimal proposed policy over other two schemes in the literature.

%

\end{document}